\theoremstyle{plain}
\newtheorem  {theorem}             {Theorem}    [section]
\newtheorem  {lemma}      [theorem]{Lemma}
\newtheorem  {proposition}[theorem]{Proposition}
\newtheorem  {corollary}  [theorem]{Corollary}
\theoremstyle{definition}
\newtheorem  {definition} [theorem]{Definition}
\newtheorem  {notation}   [theorem]{Notation}
\theoremstyle{remark}
\newtheorem  {remark}     [theorem]{Remark}
\newtheorem  {example}    [theorem]{Example}
\begin{document}

\title{Subatomic Proof Systems: Splittable Systems}

\author{Andrea Aler Tubella}
\address{IRIF, CNRS and Univ. Paris Diderot}

\author{Alessio Guglielmi}
\address{University of Bath}

\thanks{This research has been supported by EPSRC Project EP/K018868/1 \emph{Efficient and Natural Proof Systems} and by ANR Project ANR-15-CE25-0014  \emph{FISP}}

\begin{abstract}

This paper presents the first in a series of results that allow us to develop a theory providing finer control over the complexity of normalisation, and in particular of cut elimination. By considering atoms as self-dual non-commutative connectives, we are able to classify a vast class of inference rules in a uniform and very simple way. This allows us to define simple conditions that are easily verifiable and that ensure normalisation and cut elimination by way of a general theorem. In this paper we define and consider \emph{splittable systems}, which essentially comprise a large class of linear logics, including Multiplicative Linear Logic and BV, and we prove for them a \emph{splitting theorem}, guaranteeing cut elimination and other admissibility results as corollaries. In papers to follow, we will extend this result to non-linear logics. The final outcome will be a comprehensive theory giving a uniform treatment for most existing logics and providing a blueprint for the design of future proof systems.

\end{abstract}

\maketitle


\newdimen\vlsewidth
\newcommand\vltechoice{
   \mathchoice{\settowidth\vlsewidth{$\vlse$}
               \hbox to\vlsewidth{\hss$\varotimes$\hss}}
              {\settowidth\vlsewidth{$\vlse$}
               \hbox to\vlsewidth{\hss$\varotimes$\hss}}
              {\settowidth\vlsewidth{$\scriptstyle\vlse$}
               \hbox to\vlsewidth{\hss$\scriptstyle\varotimes$\hss}}
              {\settowidth\vlsewidth{$\scriptscriptstyle\vlse$}
               \hbox to\vlsewidth{\hss$\scriptscriptstyle\varotimes$\hss}}}
\newcommand\vlpachoice{
   \mathchoice{\settowidth\vlsewidth{$\vlse$}
               \hbox to\vlsewidth{\hss$\bindnasrepma$\hss}}
              {\settowidth\vlsewidth{$\vlse$}
               \hbox to\vlsewidth{\hss$\bindnasrepma$\hss}}
              {\settowidth\vlsewidth{$\scriptstyle\vlse$}
               \hbox to\vlsewidth{\hss$\scriptstyle\bindnasrepma$\hss}}
              {\settowidth\vlsewidth{$\scriptscriptstyle\vlse$}
               \hbox to\vlsewidth{\hss$\scriptscriptstyle\bindnasrepma$\hss}}}
\renewcommand\vlte {\mathbin{\vltechoice}}
\renewcommand\vlpa {\mathbin{\vlpachoice}}
  \newcommand\SA  {\mathsf{SA}}
  \newcommand\Sy  {\mathsf{S}}
  \newcommand\Uts  {\mskip.5mu\mathscr U\mskip-1mu}
  \newcommand\Rts  {\mskip.5mu\mathscr R\mskip-1mu}
  \newcommand\For  {\mskip 1mu\mathscr F\mskip-1mu}
  \newcommand\Pde  {\mskip 1mu\mathscr P\mskip-1mu}
  \newcommand\Ats  {\mskip.5mu\mathscr A\mskip-1mu}
 \newcommand\Gor  {\mskip 1mu\mathscr G\mskip-1mu}
  \newcommand\pr   {\mathop{\mathsf{pr}}}
  \newcommand\cn   {\mathop{\mathsf{cn}}}
  \newcommand\ideq {\mathrel\equiv}
  \newcommand\eq   {\mathrel=}
  \newcommand\nideq{\not\ideq}
\newcommand\rel {\mathrel\nu}  
\newcommand\relo {\mathrel{\overline{\nu}}}
\newcommand \relc {\mathrel {\color{blue} \nu}}
  \newcommand\rels { \mathrel\alpha}
    \newcommand\relso { \mathrel{\overline{\alpha}}}
\newcommand\relt {\mathrel\beta}
\newcommand\relto {\mathrel{\overline{\beta}}}  
\newcommand\relf {\mathrel\gamma} 
\newcommand\relsc {\mathrel {\color{red} \alpha}}
\newcommand\SAKS{\mathsf{SAKS}}
\newcommand\SALL {\mathsf{SALL}}
\newcommand\SALLS {\mathsf{SALLS}}
\newcommand\SAMALLS {\mathsf{SAMALLS}}
\newcommand\SMALLS {\mathsf{SMALLS}}
\newcommand\SABV {\mathsf{SABV}}
\newcommand\SABVU {\mathsf{SABVU}}
\newcommand\BVU {\mathsf{BVU}}
 \newcommand\SAMLL  {\mathsf{SAMLL}}
 \newcommand\SMLLS  {\mathsf{SMLLS}}
 \newcommand\SAMLLS  {\mathsf{SAMLLS}}
\newcommand\SLLS {\mathsf{SLLS}}
\newcommand\as{\mathbin{\mathbf{a}}}
\newcommand\asc{\mathbin{\color{red}{\textsf{a}}}}
\newcommand\vlorc{\mathbin{\color{blue}{\vlor}}}
\newcommand\vltec{\mathbin{\color{red}{\vlte}}}
\newcommand\bs{\mathbin{\textsf{b}}}
\newcommand\unit{1}
\newcommand\maxr{\times}
\newcommand\minr{+}
\newcommand\lengthpa[1]{\mathopen|#1\mathclose|_{\vlpa}}
\newcommand\htpa[1]{\mathopen|#1\mathclose|_{\vlpa}}
 \newcommand\Forcl  {\mskip 1mu\mathscr{F}_{cl}}
 \newcommand\Forll  {\mskip 1mu\mathscr{F}_{ll}}
 \newcommand\Forbv  {\mskip 1mu\mathscr{F}_{bv}}

  \newcommand\mos  {\mathord\star}
  \newcommand\mo   {\mathord\star\mskip-.4\thinmuskip\vlsbr}

\section{Introduction}

In traditional cut-elimination procedures in Gentzen theory, we eliminate cut instances from proofs by moving upwards instances of the \emph{mix} rule \cite{Gent:69:Investig:xi,Gall:93:Construc:mb}:
\[
\odn{\vdash mA, \Gamma \quad \vdash n \bar{A}, \Delta}{}{\vdash \Gamma, \Delta}{} \quad.
\]

This rule conflates one instance of cut and several instances of contraction. In the absence of contraction,  \emph{i.e.}, when $m = n = 1$, the procedure leaves the size of the proofs essentially the same; this happens, for example, in multiplicative linear logic. On the other hand, if contraction is indeed present, therefore $m > 1$ or $n > 1$, the size of the proofs can grow exponentially or more (depending on the logic and the presence of quantifiers). In other words, there are two mechanisms that have different effects on complexity: while cut elimination per se would keep proofs in the same complexity class (modulo polynomials), the presence of contraction makes for a jump to a higher complexity class. When a cut is moved up, all the contractions that it intercepts have to go up as well. This phenomenon is universal in Gentzen theory: what happens in the sequent calculus equally takes place in natural deduction and its computational interpretations. For example, the size of $\lambda$-terms shrinks under linear $\beta$-reduction. It should be interesting, and especially so for computational interpretations, to be capable of separating cut and contraction and normalise on each of them separately and in a natural way.
\vspace{2pt}

Can we do better than we do in Gentzen theory? More specifically, can we answer the following two questions?

\begin{enumerate}
\item Can we have more control on the complexity of cut elimination? An example of inadequate control in Gentzen is the inability to type optimal reductions in the $\lambda$-calculus. This is essentially due to the coarseness of $\beta$-reduction, that corresponds to the inability of Gentzen proof systems to deal with contraction in its atomic form.

\item Why does cut elimination work? One indication of the inadequate understanding provided by Gentzen theory is that each new proof system requires an ad-hoc cut elimination proof. One wonders whether some common structure exists, behind the plethora of proof systems and normalisation proofs, that would allow us to understand normalisation in a simpler and more effective way.
\end{enumerate}

In this paper we present the first in a series of results that try to answer those two questions. We give a unified characterisation of a general notion of normalisation that includes cut elimination, that works for a vast class of logics (including classical and linear logic) and that separates the normalisation procedures associated to cut and contraction. One crucial observation is that the two apparently unrelated inference mechanisms of contraction and cut are actually two manifestations of the same underlying inference shape, despite their having a wildly different behaviour from the point of view of complexity. The common shape does more than justifying cut and contraction: it is common to the vast majority of inference rules, across all proof systems for all logics. It is, to us, an astounding and unexpected finding, for which we do not have a convincing a priori justification at this point. For now, we can only say that it works and it has already helped the development of new proof systems.

The common shape is the following inference figure (which we present with a slight simplification that we can ignore for now):
\[
\odn{(A \rels B) \relt (C \rels D)}{}{(A \relt C) \rels (B \relt D)}{} \quad,
\]
where $A, B, C$ and $D$ are formulae and where ${\rels}$ and ${\relt}$ are connectives. The connectives ${\rels}$ and ${\relt}$ could be, respectively, conjunction and disjunction in some logic (not necessarily classical), but they also could be parallel and sequential composition in a process algebra, and in that case we would stipulate that ${\relt}$ is self-dual and non-commutative, as in pomset logic or in BV \cite{Reto:97:Pomset-L:fd,Gugl:06:A-System:kl}. ${\rels}$ and ${\relt}$ can also be something entirely different: this paper gives some conditions under which we are able to automatically provide normalisation procedures that, among other consequences, entail the admissibility of cut and other inference rules.

A contraction is obtained by the inference shape above by stipulating that an atom is the superposition of its two truth values, that are maintained coherent in a proof by requiring that the binary connective $\as$, representing the atom $a$, is non-commutative; this way the two superposed states are not mixed. Under this interpretation, we could read $(\fff \as \ttt)$ as $a$ and $(\ttt \as \fff)$ as its negation $\bar a$ (therefore $\as$ is also self-dual), while $(\fff \as \fff)$ and $(\ttt \as \ttt)$ would be interpreted as $\fff$ and $\ttt$, respectively. Therefore, we obtain contraction on atoms as follows:
\vspace{3pt}

\centerline{we read
$ \odn{(\fff \as \ttt) \vlor (\fff \as \ttt)}{}{(\fff \vlor \fff) \as (\ttt \vlor \ttt)}{}$ as  $\odn{a \vlor a}{}{a}{}$ and we read 
$\odn{(\ttt \as \fff) \vlor (\ttt \as \fff)}{}{(\ttt \vlor \ttt) \as (\fff \vlor \fff)}{}$ as $ \odn{\bar{a} \vlor \bar{a}}{}{\bar{a}}{}$.}
\vspace{3pt}

Under the same interpretation, we obtain an atomic version of the cut rule as follows:
\vspace{3pt}

\centerline{we read
$ \odn{(\fff \as \ttt) \vlan (\ttt \as \fff)}{}{(\fff \vlan \ttt) \as (\ttt \vlan \fff)}{}$ as  $\odn{a \vlan \bar a}{}{\fff}{}$ and we read 
$\odn{(\ttt \as \fff) \vlan (\fff \as \ttt)}{}{(\ttt \vlan \fff) \as (\fff \vlan \ttt)}{}$ as $ \odn{\bar{a} \vlan a}{}{\fff}{}$.}
\vspace{3pt}

Crucially, in order to define and operate on this new structure, we need to abandon Gentzen formalisms and use deep inference. One sufficient reason to do so is that we know that there cannot be analytic and complete Gentzen proof systems for a linear logic containing a self-dual, non-commutative connective, such as BV \cite{Tiu:06:A-System:ai} or any linear logic where the atoms would be regarded as self-dual non-commutative connectives.

The idea behind deep inference is to set proof composition free from the classical logic bias. While in Gentzen theory proofs are composed by a fixed mechanism that essentially maps the language constructors to a classical metalevel, in deep inference the metalevel is abolished: proofs are composed by the same connectives by which formulae are composed. In other words, if a language models a certain algebra, the proofs of the tautologies in that same language have the same algebraic structure as the language, enriched by an inference relation. The only requirement for inference is essentially its supporting a deduction theorem. A succinct survey of deep inference is \cite{Gugl:14:Deep-Inf:fj}, and up-do-date information about the literature is on the website \cite{Gugl::Deep-Inf:uq}. One reason that makes deep inference promising in our quest to answer the two questions above is that its ability to reduce contractions to their atomic form already led to the design of a typed $\lambda$-calculus that is much closer to optimality than what can be typed in Gentzen theory \cite{GundHeijPari:13:Atomic-L:fk} as well as provided a framework for an enhanced type system for interaction nets \cite{GimeMose:13:The-Stru:fk}. We use the deep-inference formalism called \emph{open deduction}. The way we define open deduction here provides for a more concrete alternative to \cite{GuglGundPari::A-Proof-:fk}, which is where the formalism was firstly defined.

Deep inference is a generalisation of Gentzen theory, but its features follow the tradition. In the deep inference literature, where many normalisation procedures have been designed for proof systems for several logics (classical, modal, linear, substructural), the separation of the two mechanisms is already apparent: we call the normalisation of contractions \emph{decomposition} and the proof of admissibility of cuts (and several other rules) \emph{splitting}. While the former procedure generates significant complexity, the latter does not. We answer the two questions above by establishing that the decomposition + splitting approach to normalisation is sufficiently general, and we find out what conditions on proof systems enable it. In other words, we characterise the proof systems that enjoy decomposition and splitting, so that their normalisation theory immediately follows. A benefit that we are already exploiting in current research is that this result helps in the design of new proof systems, because it reveals properties, leading to good normalisation procedures, that would otherwise be undetectable.

This paper is devoted to proving a splitting theorem for a large class of proof systems whose only inference shape is the one mentioned above, under certain conditions. The central notion of this paper is that of \emph{splittable system}. This, in turn, depends on a classification of inference rules of the above shape into two classes, those that behave like contraction and those that behave like cut. Splittable systems are those without contraction-like rules, therefore we can say that a splitting theorem is a generalisation of cut elimination for a generalisation of linear logics. This is what the present paper is about.
The \emph{subatomic} methodology presented in this paper has yielded further results, complementary to the ones presented here. Further publications will address contraction-like rules and their normalisation theorems.

The idea behind splitting is very simple, and is rooted in deep inference methods. It focuses on understanding the behaviour of the context around the cut, and in particular it consists on breaking down a proof in different pieces by following the logical connectives involved in the cut to find their duals. We show that we can then rearrange the different components of the proof to obtain a cut-free proof. Cut-elimination via splitting \cite{Gugl:06:A-System:kl} has been achieved in the deep inference systems for linear logic \cite{Kahr:08:Interact:ad, Stra:02:A-Local-:ul}, multiplicative exponential linear logic \cite{Stra:03:MELL-in-:oy}, the mixed commutative/non-commutative logic BV \cite{Gugl:06:A-System:kl} and its extension with linear exponentials NEL \cite{GuglStra:02:A-Non-co:dq}, and classical predicate logic  \cite{Brun:06:Cut-Elim:cq}.

This type of argument has been used to prove the admissibility of rules other than the atomic cut \cite{Gugl:06:A-System:kl}, showing that it can be applied to any connective that we can follow upwards in a proof. Thus, the splitting procedure hinges strongly on the dualities present in propositional logical systems (to find the duals of $a$ and $\bar a$) and on the regularity of deep inference rules (to follow the atoms in a proof). In this paper, we show that in systems where the scope of connectives only increases reading from bottom to top, \emph{i.e.}, in the aforementioned splittable systems, we can follow these connectives all the way up through the proof. Interestingly, the class of rules shown admissible is precisely the class of rules that allow us to make the cut atomic in deep inference formalisms. 

Achieving this simple characterisation of splittable systems gives us a full understanding of how the splitting procedure works, and why it has been used with success to prove the admissibility of different rules in several systems. We note that splitting is a global procedure: we need to study the proof as a whole to obtain a cut-free proof through splitting. Furthermore, splitting does not create meaningful complexity: the size of the cut-free proofs obtained by general splitting is linear on the size of the proofs with cut they come from, and splitting is a procedure of polynomial-time complexity. This is an interesting observation for the further study of complexity, since deep inference proofs are as long or shorter than sequent proofs \cite{BrusGugl:07:On-the-P:fk}.

\section{Subatomic systems}

Since our main aim is to study the interactions between rules, we will do so in the setting of deep inference \cite{Gugl::Deep-Inf:uq} where rules can be reduced to their atomic form providing great regularity in the inference rule schemes. In deep inference, derivations can be composed by the logical connectives that are used to compose formulae \cite{GuglGundPari::A-Proof-:fk}. For example, if
\[
\phi=\odv{A}{}{B}{}
\quad\mbox{and}\quad
\psi=\odv{C}{}{D}{}
\]
are two derivations in propositional logic, 
\[ \phi \land \psi=\odv{A}{}{B}{} \land \odv{C}{}{D}{} \quad \mbox{and} \quad
\phi \vlor \psi= \odv{A}{}{B}{} \vlor \odv{C}{}{D}{}\]
 are two valid derivations with premisses $A \vlan C$ and $A \vlor C$ and conclusions $B \vlan D$ and $B \vlor D$ respectively. In deep inference, rules can be applied at any depth inside a formula and as a result every contraction and cut instance can be locally transformed into their atomic variants by a local procedure of polynomial-size complexity \cite{BrunTiu:01:A-Local-:mz}.

This provides a surprising regularity in the inference rule schemes: it can be observed that in most deep inference systems all rules besides the atomic ones can be expressed as 
\[
\odn{(A \rels B) \relt (C \rels' D)}{}{(A \relt C) \rels (B \relt' D)}{} \quad \mbox{,}
\]
where $A, B,C,D$ are formulae and ${\rels}, {\relt}, {\rels}', {\relt}'$ are connectives. We call this rule shape a \emph{medial shape}.  Following this discovery, we will achieve an even greater regularity on the inference rules by looking even further, \emph{inside the atoms}. In this section we will introduce a new methodology through which we are able to represent \emph{every rule} as an instance of a \emph{single} inference rule scheme. This characterisation is not trivial: it is a delicate trade-off to impose restrictions on the possible assignments for  $\alpha, \beta,\gamma, \epsilon,\zeta,\eta$ that allow us to characterise systems that enjoy cut-elimination, but that are general enough to encompass the expressivity of a wide variety of logics. Indeed, the finding of these restrictions is the product of a long trial-and-error phase to obtain the desired generality together with the desired properties.

The main idea of this work is to consider atoms as self-dual, noncommutative binary connectives and to build formulae by freely composing units by atoms and the other connectives. We will consider the occurrences of an atom $\as$ as interpretations of more primitive expressions involving a noncommutative binary connective, denoted by $\as$. Two formulae $A$ and $B$ in the relation $\as$, in this order, are denoted by $A \as B$.  Formulae are built over the units for the connectives, denoted for example by $\ttt, \fff$ in the case of classical logic. We can think of it as a superposition of truth values: $\fff \as \ttt$ is the superposition of the two possible assignments for the atom $a$. We can for example have a projection onto a specific assignment by choosing which `side' we read: if we read the values on the left of the atom we assign $\fff$ to $a$ and if we read the ones on the right we assign $\ttt$ to $a$. We call these formulae \emph{subatomic}. For example, 
\[ ((\ttt \as \fff) \vlan (\fff \bs \fff)) \vlor ((\fff \vlan \ttt) \as \ttt) \quad \mbox{and} \quad (\ttt \as \ttt) \bs (\fff \vlan \fff)
\]
 are subatomic formulae for classical logic. 

In this way, we obtain an extended language of formulae which we will relate to the usual propositional formulae, or \emph{interpret}, through an interpretation map $\overset{I}{\mapsto}$ in Section \ref{int}.

By exploiting this new methodology, we will show through examples that we are strikingly able to present proof systems in such a way that \emph{every rule} has a medial shape, \emph{including the atomic rules} that do not usually follow this scheme. 


\subsection{Subatomic formulae}

Subatomic formulae are built by freely composing constants by connectives and atoms. 
By considering atoms as connectives we will work with an extended language of formulae, since we can have atoms in the scope of other atoms, something that does not occur in `traditional' formulae.

\begin{definition}
Let $\Uts$ be a denumerable set of \emph{constants} whose elements are denoted by $u, v, w, \dots$. Let $\Rts$ be a denumerable partially ordered set of \emph{connectives} whose elements are denoted by $\rels$, $\relt$, $\relf$, \dots. The set $\For$ of \emph{subatomic formulae} (or $\SA$ formulae) contains terms defined by the grammar
\[
\For\mathrel{::=}\Uts\mid\For \; \Rts \; \For
\quad\mbox{.}
\]
Formulae are denoted by $A$, $B$, $C$, \dots.

\vspace{2pt}
 A \emph{(formula) context} $K\vlhole\cdots\vlhole$ is a formula where some subformulae are substituted by holes; $K\{A_1\}\cdots\{A_n\}$ denotes a formula where the $n$ holes in $K\vlhole\cdots\vlhole$ have been filled with $A_1$, \dots, $A_n$.

\vspace{2pt}
 The expression $A\ideq B$ means that the formulae $A$ and $B$ are syntactically equal. We omit parentheses when there is no ambiguity.
\vspace{2pt}

In $K\{A \rels B\}$ we say that the subformulae of $A$ and $B$ are \emph{in the scope of} $\rels$.
\end{definition}

We show the subatomic methodology for classical logic and for multiplicative linear logic.

\begin{example}\label{exCL}
The set $\Forcl$ of subatomic formulae for classical logic is given by the set of constants $\Uts=\{\fff, \ttt\}$ and the set of connectives $\Rts=\{\vlan, \vlor\}\cup \Ats$ where $\Ats$ is a denumerable set of atoms, denoted by ${\as} , {\bs}, \dots$ with $\Ats \cap \{\vlan, \vlor\}= \varnothing$. Two examples of subatomic formulae for classical logic are
$$A\ideq ((\fff \as \ttt) \vlor (\ttt \as \ttt)) \vlan (\ttt \bs \fff) \quad \mbox{and} \quad B\ideq ((\ttt \bs \fff) \vlan \ttt ) \vlor (\fff \as \fff) \quad \mbox{.}$$
\end{example}

\begin{example}\label{exMLL}
\vllineartrue
The set $\Forll$ of subatomic formulae for multiplicative linear logic is given by the set of constants $\Uts=\{\bot,  \one \}$ and the set of connectives $\Rts=\{\vlpa, \vlte\} \cup \Ats$ where $\Ats$ is a denumerable set of atoms, denoted by ${\as}, {\bs}, \dots$ with  $\Ats \cap \{\vlpa, \vlte\}= \varnothing$. Two examples of subatomic formulae for linear logic are 
$$C\ideq ((\one \vlpa \bot) \as \one)\vlte \bot \quad \mbox{and} \quad D\ideq ((\bot \vlpa \one) \bs \one) \vlte (\one \as \bot) \quad \mbox{.}$$
\vllinearfalse
\end{example}

Aside from classical logic and multiplicative linear logic, we will feature the logic $\BV$ \cite{Gugl:06:A-System:kl, Kahr:07:System-B:fk, Kahr:09:On-Linea:ix} amongst the examples to showcase a well-studied logic with self-dual non-commutative connectives. For that, we define the logic $\BVU$. $\BV$ will correspond to $\BVU$ with all the units identified.

\vspace{6pt}

\begin{example}\label{exBV}
We define system $\BVU$. The formulae of $\BVU$ are built from the units  $\bot, \circ, \one$ by composing them with the connectives $\vlpa$, $\triangleleft$, $\vlte$. 

The connectives $\vlpa$ and $\vlte$ are dual to each other, associative, commutative and have units $\bot$ and $\one$ respectively. $\triangleleft$ is self-dual and associative, and has unit $\circ$. 

Negation on $\BVU$ formulae is built respecting De Morgan dualities, with $\bar \circ =\circ$ and $\bar \bot=\one$.

The units verify the equations $\circ \vlpa \circ=\one$, $\circ \vlte \circ=\bot$ and $\one \triangleleft \one=\one$, $\bot \triangleleft \bot=\bot$.

The inference rules for system $\BVU$ are given by the same rules as for system $\BV$ \cite{Gugl:06:A-System:kl}. System $\BV$ corresponds to system $\BV$ with the three units identified,  \emph{i.e.}, $\one=\circ=\bot$.
\vspace{6pt}

The set $\Forbv$ of subatomic formulae for the non-commutative logics $\BVU$ and $\BV$ is given by the set of constants $\Uts=\{ \bot, \one, \circ \}$ and the set of connectives $\Rts=\{{\vlpa}, {\triangleleft}, {\vlte}\} \cup \Ats$ where $\Ats$ is a denumerable set of atoms, denoted by $a, b, \dots$ with  $\Ats \cap \{{\vlpa}, {\triangleleft}, {\vlte}\}= \varnothing$. Two examples of subatomic formulae for BV are
$$E\ideq (\one \as \bot) \triangleleft (\circ \vlte (\bot \bs \bot)) \quad \mbox{and} \quad F \ideq ((\circ \vlte \one) \as \one) \vlpa \one \quad \mbox{.}$$ 
\end{example}

Just like for `ordinary' formulae, we will define an equational theory and a negation map on the set of subatomic formulae. We will work in a classical setting, in the sense that we will consider an involutive negation that satisfies De Morgan dualities.

Furthermore, in order to keep track of the equational theory in the general cut-elimination result, we specify the equalities that we allow. 
Since we consider atoms as connectives, we will allow for a broad range of behaviours for the connectives, not assuming any logical characteristics or properties such as commutativity or associativity. We will therefore encompass logics with both commutative and non-commutative, associative and non-associative, dual and-self dual connectives. This feature deserves to be highlighted since expressing self-dual non-commutative connectives into proof systems that enjoy cut-elimination is a challenge in Gentzen-style sequent calculi: it is impossible to have a complete analytic system with a self-dual non-commutative connective \cite{Tiu:06:A-System:ai}.

\begin{definition}\label{eq}
We define \emph{negation} as a pair of involutive maps $\bar{\cdot}: \Rts \mapsto \Rts$ and  $\bar{\cdot}: \Uts \mapsto \Uts$. We define the \emph{negation map on formulae} as the map inductively defined by setting $\overline{A \rels B} := \overline{A} \relso \overline{B}$.
\vspace{4pt}

We define an equational theory $=$ on $\For$ as the minimal equivalence relation closed under negation (if $A=B$, then $\bar A= \bar B$) and under context (if $A=B$, then $K\{A\}=K\{B\}$ for any context $K\vlhole$) defined by any subset of axioms of the form:
\[
\begin{array}{@{}c@{\quad}l@{%
\kern-35pt
\qquad}r@{}}
(1)&\mbox{$\forall A, B, C \in \For$. $(A \rels B) \rels C=A \rels (B \rels C)$ };
&\emph{(Associativity of  $\rels$)}\\
(2)&\mbox{$\forall A, B\in \For$. $A \rels B= B \rels A$ };
&\emph{(Commutativity of $\rels$)}\\
(3)&\mbox{$\forall A \in \For$.  $A \rels u_{\rels}=A=u_{\rels} \rels A$ for a fixed $u_{\rels} \in \Uts$ };
&\emph{(Unit of $\rels$)}\\
(4)&\mbox{$v \rels w=u$  for fixed $v,w,u\in \Uts$ };
&\emph{(Constant assignment for $\rels$)}\\
(5)&\mbox{$u=v$ for fixed $u,v\in\Uts$ }.
&\emph{(Constant identification)}\\
\end{array}
\]

If there is an axiom of the form (1) for $\rels$, we say that $\rels$ is \emph{associative}. If there is an axiom of the form (2) for $\rels$, we say that $\rels$ is \emph{commutative}. If there is an axiom of the form (3) for $\rels$ we say that $\rels$ is \emph{unitary}, and we call $u_{\rels}$ the \emph{unit of} $\rels$.

\end{definition}

\begin{remark}
Since the equational theory is closed under negation, if $\rels$ is unitary with unit $u_{\rels}$, then $\relso$ is unitary and its unit is $\overline{ u_{\rels}}$.
\end{remark}

\begin{example}\label{exCLeq}
For the set of subatomic formulae for classical logic $\Forcl$ defined in example \ref{exCL}, we define negation through:
\[
\begin{array}{l}
\bar {\vlan} \ideq {\vlor} \mbox{ ;}\\
\bar {\as} \ideq {\as}  \mbox{ for all } {\as} \in \Ats  \mbox{ ;}\\
\bar \ttt \ideq \fff \mbox{ .}\\
\end{array}
\]

We define the equational theory $=$ on $\Forcl$ as the minimal equivalence relation closed under negation and under context defined by:
\[
\begin{array}{ll}
\mbox{For all } A, B, C \in \For:&\\[8pt]
 (A \vlan B) \vlan C= A \vlan (B \vlan C)  \mbox{ ;} &\qquad  (A \vlor B) \vlor C= A \vlan (B \vlor C)  \mbox{ ;}\\
 A \vlan B=B \vlan A \mbox{ ;} &\qquad A \vlor B=B \vlor A \mbox{ ;}\\[6pt]
A \vlan \ttt=A \mbox{ ;} & \qquad A \vlor \fff = A  \mbox{ ;} \\
 \fff \vlan \fff=\fff \mbox{ ;} & \qquad \ttt \vlor \ttt =\ttt \mbox{ ;}\\[6pt]
 \forall {\as} \in \Ats .  \fff \as \fff=\fff \mbox{ ;} &\qquad  \forall {\as} \in \Ats . \ttt \as \ttt =\ttt \mbox{ .}\\
\end{array}
\]
\end{example}

\begin{example}\label{exMLLeq}
\vllineartrue
For the set of subatomic formulae for linear logic $\Forll$ defined in example \ref{exMLL}, we define negation through:
\[
\begin{array}{l}
\bar {\vlte} \ideq {\vlpa}\mbox{ ;}\\
\bar {\as} \ideq {\as}  \mbox{ for all } {\as} \in \Ats \mbox{ ;}\\
\bar \one \ideq \bot \mbox{ .}\\
\end{array}
\]

We define the equational theory $=$ on $\Forll$ as the minimal equivalence relation closed under negation and under context defined by:
\[
\begin{array}{ll}
\mbox{For all } A, B, C \in \For:&\\[8pt]
 (A \vlte B) \vlte C= A \vlte (B \vlte C) \mbox{ ;} &\qquad (A \vlpa B) \vlpa C= A \vlpa (B \vlpa C) \mbox{ ;}\\
 A \vlte B=B \vlte A \mbox{ ;} &\qquad  A \vlpa B=B \vlpa A \mbox{ ;}\\[6pt]
 A \vlte \one=A \mbox{ ;} & \qquad A \vlpa \bot = A  \mbox{ ;}\\[6pt]
 \forall {\as} \in \Ats .  \bot \as \bot=\bot \mbox{ ;} &\qquad  \forall {\as} \in \Ats .  \one \as \one =\one \mbox{ .}\\
\end{array}
\]
\vllinearfalse
\end{example}

\begin{example}\label{exBVeq}
For both $\BVU$ and $\BV$ we will define the same negation map. They will differ only on the equational theory, since all the units are identified in $\BV$.

For the set of subatomic formulae for $\BVU$ and for $\BV$ $\Forbv$ defined in example \ref{exBV}, we define negation through:
\[
\begin{array}{l}
\bar {\vlte} \ideq \vlpa \mbox{ ;}\\
\bar \triangleleft \ideq \triangleleft \mbox{ ;}\\
\bar a \ideq a \ \mbox{for all } a \in \Ats \mbox{ ;}\\
\bar \circ \ideq \circ \mbox{ ;} \\
\bar \bot \ideq \one \mbox{ .}\\
\end{array}
\]

For the logic $\BVU$ we define an equational theory $=$ on $\Forbv$ as the minimal equivalence relation closed under negation and under context defined by:
\[
\begin{array}{ll}
\mbox{For all } A, B, C \in \For:&\\[8pt]
 (A \vlte B) \vlte C= A \vlte (B \vlte C) \mbox{ ;} &\qquad (A \vlpa B) \vlpa C= A \vlpa (B \vlpa C) \mbox{ ;}\\
A \vlte B=B \vlte A \mbox{ ;} &\qquad A \vlpa B=B \vlpa A \mbox{ ;}\\
(A \triangleleft B) \triangleleft C =A \triangleleft (B \triangleleft C)\mbox{ ;} & \\[6pt]
 A \vlte \one=A \mbox{ ;} & \qquad A \vlpa \bot = A \mbox{ ;} \\
A \triangleleft \circ = A \mbox{ ;} &\qquad \circ \triangleleft A=A \mbox{ ;} \\[6pt]
\circ \vlte \circ = \bot \mbox{ ;} & \qquad \circ \vlpa \circ =\one \mbox{ ;} \\[6pt]
\forall a \in \Ats .  \bot \as \bot=\bot \mbox{ ;} &\qquad  \forall a \in \Ats.  \one \as \one =\one \mbox{ ;}\\
 \bot \triangleleft \bot=\bot \mbox{ ;} &\qquad \one \triangleleft \one =\one \mbox{ .}\\

\end{array}
\]

The equational theory for the logic $\BV$ defined on the set of subatomic formulae $\Forbv$ is given by the previous equations, together with the added axioms:
\[
\begin{array}{l@{\qquad \qquad \qquad \qquad \qquad  }l@{\qquad\qquad\qquad\qquad}}
 \one = \circ \mbox{ ;}  &\bot = \circ \mbox{ .} \\
\end{array}
\]

\end{example}

\begin{remark}
To define $\BV$, it would be possible to replace the equations $\circ \vlte \circ = \bot $ and $\circ \vlpa \circ =\one $ by inference rules
\[
\odn{\one}{}{\circ \vlpa \circ}{} \quad \mbox{and} \quad \odn{\circ \vlte \circ}{}{\bot}{}\mbox{ .}
\]
Since these rules only operate on units rather than being a rule scheme, the uniformity provided by the subatomic approach is still very beneficial. The splitting results presented in this paper would still hold with this replacement, by modifying them very slightly to allow for rules involving units. 

For the sake of brevity and since in $\BV$ all the units are identified, in the examples presented in this paper we choose to work with the above equations rather than with inference rules. Nonetheless, the approach to representing $\BV$ may vary as research continues in this area.
\end{remark}

Given a propositional logic with certain connectives and constants, its subatomic counterpart is therefore composed of an extended language of formulae, made up from the same connectives but with the added possibility of having atoms in the scope of other atoms. 

\subsection{Subatomic proof systems}

The useful properties of subatomic formulae become apparent when we extend the principle to atomic inference rules. Let us consider, for example, the usual contraction
rule for an atom in classical logic given by
 \[
\odn{a \vlor a}{}{a}{} \quad.
\] 

We could obtain this rule subatomically by reading $\fff \as \ttt$ as $a$ and $\ttt \as \fff$ as $\bar a$, as follows:
\vspace{4pt}

\centerline{we read
$ \odn{(\fff \as \ttt) \vlor (\fff \as \ttt)}{}{(\fff \vlor \fff) \as (\ttt \vlor \ttt)}{}$ as  $\odn{a \vlor a}{}{a}{}$ and we read 
$\odn{(\ttt \as \fff) \vlor (\ttt \as \fff)}{}{(\ttt \vlor \ttt) \as (\ttt \vlor \ttt)}{}$ as $ \odn{\bar{a} \vlor \bar{a}}{}{\bar{a}}{}$.}
\vspace{4pt}

These rules are therefore generated by the linear scheme
\[
\odn{(A \as B) \vlor (C \as D)}{}{(A \vlor C) \as (B \vlor D)}{} \mbox{, where $A,B,C,D$ are formulae.}
\]

The non-linearity of the contraction rule has been pushed from the atoms to the units.

Similarly, we can consider the atomic identity rule 
\[\odn{\ttt}{}{a \vlor \bar{a}}{} \quad.\] 

It can be obtained subatomically as follows:
\vspace{4pt}

\centerline{we read
$ \odn{(\fff \as \ttt) \vlor (\ttt \as \fff)}{}{(\fff \vlor \ttt) \as (\ttt \vlor \fff)}{}$ as  $\odn{\ttt}{}{a \vlor \bar a}{}$.}
\vspace{4pt}

Similarly to the contraction rule,  it is generated by the linear scheme 
\[
\odn{( A \vlor B) \as (C \vlor D)}{}{(A \as C) \vlor (B \as D)}{} \mbox{, where $A,B,C,D$ are formulae.}
\]

It is quite plain to see that both the subatomic contraction rule and the subatomic introduction rule have the same shape. This regularity is made useful in combination with the observation that
in fact the linear rule scheme
\[
\odn{(A \rels B) \relt (C \rels' D)}{}{(A \relt C) \rels (B \relt' D)}{} \quad \mbox{,}
\]
where ${\rels}, {\relt}, {\rels}', {\relt}'$ are connectives, and $A,B,C,D$ are formulae is typical of logical rules in deep inference. We refer to it as a \emph{medial shape}. For example, consider system $\SKS$ for classical logic in Figure \ref{SKS}.

\begin{figure}
\centering
\fbox{
\begin{tabular}[c]{cc} 

\odn{ \ttt}{ai\downarrow}
{a \vlor \bar a}{}&

\odn{a \vlan \bar a }{ai\uparrow}
{ \fff}{}\\[15pt]

\odn{ (A \vlor B)\vlan C}{s}
{ (A\vlan C)\vlor B}{} &

\odn{ (A\vlan B)\vlor (C\vlan D)}{m}
{(A\vlor C)\vlan (B\vlor D)}{}\\[15pt]

\odn{a \vlor a}{ac \downarrow}
{ a}{}&

\odn{a}{ac \uparrow}
{ a \vlan a}{}\\[15pt]

\odn{\fff}{aw\downarrow}{a}{}&

\odn{a}{aw\uparrow}{\ttt}{}\\

\end{tabular}}

\vspace{3pt}
\caption{ System $\SKS$}
 \label{SKS}
\end{figure}

We can see that the rule $m$ follows this scheme as well, and
we can derive the rule $s$ from the rule
\[
\odn{ (A \vlor B)\vlan (C \vlor D)}{\vlan\downarrow}
{ (A\vlan C)\vlor (B \vlor D)}{} \quad,
\]
which follows this scheme. We have therefore uncovered an underlying structure behind the shape of inference rules, that we will exploit to obtain a general characterisation of rules.

To make use of the general characterisation, we will impose some restrictions on  ${\rels}, {\rel}, {\relt}, {\relf}$. These conditions strike a balance between being general enough to encompass a wide variety of logics and being explicit enough to enable us to generalise procedure such as cut-elimination and decomposition. 

The restrictions on the connectives of the rule scheme stem from the observation that certain dualities between the connectives are maintained in every rule. For example, we can write the rule $\vlan\!\!\downarrow$ as 
\[
\odn{ (A \vlor B)\vlan (C\vlor D)}{\vlan \downarrow}
{ (A\vlan C)\vlor (B \mathbin{\bar \vlan} D)}{} \quad
\]

and the atomic identity rule as 

\[
\odn{ (A \vlor B)\as (C\vlor D)}{}
{ (A \as C)\vlor (B \mathbin{\bar {\as}} D)}{} \quad \mbox{.}
\]

We will generalise this observation, considering rules with a medial shape and certain dualities between the connectives involved and show that this shape is enough to represent a wide variety of logics. With the subatomic methodology, we are therefore able to represent proof systems in such a way that every rule has the same shape. This full regularity gives us a newly gained ability to characterise proof systems that enjoy properties such as decomposition and cut-elimination.
\vspace{4pt}

To characterise the dualities present in the inference rules, we introduce a notion of polarity in the pairs of dual relations. This notion of polarity can be reminiscent of the polarities assigned to connectives in linear logic \cite{Gir:91:Anewcon}, but the idea behind it is rather to assign which of the relations in the pair is `stronger' than the other. Intuitively, it loosely corresponds to assigning which relation of the pair will imply the other. For example, in classical logic $A \vlan B$ implies $A \vlor B$, and thus we will assign $\vlan$ to be \emph{strong} and $\vlor$ to be \emph{weak}. 

\begin{definition}
For each pair of connectives $\{\rels, \relso\}$, we give a polarity assignment: we call one connective of the pair \emph{strong} and the other one \emph{weak}. 

If $\rels$ is strong and $\relso$ is weak, we will write $\rels^M=\relso^M=\rels$ and $\rels^m=\relso^m=\relso$. Self-dual connectives are both strong and weak.
\end{definition}

\newcommand\downrule{ $\odn{ ( A \relt B)  \rels  (C \relt D) }{}{(  A \rels  C) \relt (B \rels^m D )}{}$}
\newcommand\uprule{$\odn{(A \relt B) \rels  (C \relt^M D)}{}{(A \rels C) \relt (B \rels D)}{}$}

\begin{definition}
A \emph{subatomic proof system} $\SA$ with set of formulae $\For$ is
\begin{itemize}
\item a collection of inference rules of the shape \downrule, $\rels, \relt \in \Rts$, called \emph{down-rules},
\item a collection of inference rules of the shape \uprule, $\rels, \relt \in \Rts$, called \emph{up-rules}, 
\item a collection of rules $\odn A\eq B{}$ and $\odn {\overline{A}}{\eq}{\overline B}{}$, for every axiom $A=B$ of the equational theory $=$ on $\For$, called \emph{equality rules}.
\end{itemize}
\end{definition}
 
Note that the non-invertible rules are linear.

\begin{remark}
Since we will not always work modulo equality, we define the equality rules to be inference steps just like the inference rules, rather than focusing on equality as equations between formulae. Two formulae $A$ and $B$ will be equal if and only if there is a derivation from $A$ to $B$ composed only of equality rules.

We could have just as well defined equality between formulae directly in this way, but chose to define it initially as an equivalence relation for the sake of clearer exposition.

The rules $\odn A\eq B{}$ are invertible and correspond to equivalence by mutual implication. Every non-invertible rule with logical significance is therefore an instance of the general rule scheme with medial shape.
\end{remark}

\begin{remark}
We will often use the notation
\[
\odn{(A\relt B) \rels^M (C \relt D)}{}{(A \rels B) \relt (C \relso D)}{} 
\] 
for down-rules with a strong connective in the premiss where $\relt$ is commutative.
\end{remark}

\begin{example}
We consider $\vlan $ as strong and \/ $\vlor$ as weak in classical logic. The subatomic proof system $\SAKS$ is given by the inference rules in Figure \ref{SAKS}, together with the equality rules given by $\odn{A}{=}{B}{}$ for every $A$, $B$ on opposite sides of the equality axioms provided in example \ref{exCLeq}. 

Rules labelled with $\downarrow$ are down-rules, and rules labelled by \/ $\uparrow$ are up-rules. The medial rule labelled by $m$ is self-dual, and is both a down-rule and an up-rule.
\end{example}

\begin{figure}
\centering
\fbox{
\begin{tabular}{c} 

\odn{ (A\vlor B)\as  (C\vlor D)}{{\as}\downarrow}
{( A\as C ) \vlor ( B\as D)}{}\qquad \qquad

\odn{( A\as B) \vlan ( C\as D)}{{\as}\uparrow}
{ (A\vlan C)\as  (B\vlan D)}{}\\[15pt]

\odn{ (A \vlor B)\vlan (C\vlor D)}{\vlan \downarrow}
{ (A\vlan C)\vlor (B \vlor D)}{} \qquad \qquad

\odn{ (A \vlor B)\vlan (C\vlan D)}{\vlor \uparrow}
{ (A\vlan C)\vlor (B \vlan D)}{} \\[15pt]

\odn{ (A\vlan B)\vlor (C\vlan D)}{m}
{(A\vlor C)\vlan (B\vlor D)}{}\\[15pt]

\odn{( A \as B ) \vlor ( C \as D)}{{\as}c}
{ (A \vlor C) \as (B \vlor D)}{}\qquad \qquad 

\odn{( A \vlan B )\as ( C \vlan D )}{{\as}\bar c }
{ ( A \as C ) \vlan ( B \as D)}{}\\

\end{tabular}}
\caption{$\SAKS$}
\label{SAKS}

\end{figure}

\begin{example}\label{exSAMLLS}
We consider \/ $\vlte$ as strong and \/ $\vlpa$ as weak in multiplicative linear logic. The subatomic proof system $\SAMLLS$ is given by the inference rules in Figure \ref{SAMLLS} together with the equality rules given by $\odn{A}{=}{B}{}$ for every $A$, $B$ on opposite sides of the equality axioms provided in example \ref{exMLLeq}.  

\end{example}

\begin{figure}
\centering
\vllineartrue
\fbox{
\begin{tabular}{c} 

\odn{ (A \vlpa B) \as  (C \vlpa D)}{{\as}\downarrow}
{( A\as C) \vlpa ( B \as D)}{}\qquad \qquad

\odn{( A\as B ) \vlte ( C\as D)}{{\as}\uparrow}
{ (A \vlte C)\as  (B \vlte D)}{}\\[15pt]

\odn{ (A \vlpa B) \vlte  (C \vlpa D)}{\vlte\downarrow}
{( A\vlte C) \vlpa ( B \vlpa D)}{}\qquad \qquad

\odn{( A\vlpa B ) \vlte ( C\vlte D)}{\vlpa\uparrow}
{ (A \vlte C)\vlpa  (B \vlte D)}{}\\

\end{tabular}}
\caption{$\SAMLLS$}
\label{SAMLLS}
\end{figure}

\begin{figure}
\fbox{
\begin{tabular}{c} 

\odn{ \one}{ai\downarrow}
{a \vlpa \bar a}{}\qquad \qquad \qquad \qquad \qquad \quad

\odn{a \vlte \bar a}{ai\uparrow}
{\bot}{}\\[15pt]

\odn{( A\vlpa B ) \vlte C}{s}
{ (A \vlte C)\vlpa  B}{}\\[12pt]

\end{tabular}}
\caption{$\SMLLS$  \cite{Stra:03:Linear-L:lp}}
\label{SMLLS}
\vllinearfalse
\end{figure}

\begin{example}
We consider $\vlte$ as strong and $\vlpa$ as weak in $\BVU$ and $\BV$. The subatomic proof system $\SABVU$ is given by the inference rules in Figure \ref{SABV} together with the equality rules given by $\odn{A}{=}{B}{}$ for every $A$, $B$ on opposite sides of the equality axioms for $\BVU$ provided in example \ref{exBVeq}.

Likewise, the subatomic proof system $\SABV$ is given by the same inference rules and equality rules, together with the equality rules given by $\odn{\bot}{=}{\circ}{}$, $\odn{\one}{=}{\circ}{}$ and their converse.
\vllinearfalse
\end{example}

\begin{figure}
\vllineartrue
\centering
\fbox{
\begin{tabular}{c} 

\odn{ (A \vlpa B) \as  (C \vlpa D)}{\as\downarrow}
{( A\as C) \vlpa ( B \as D)}{}\qquad \qquad

\odn{( A\as B ) \vlte ( C\as D)}{\as\uparrow}
{ (A \vlte C)\as  (B \vlte D)}{}\\[15pt]

\odn{ (A \vlpa B) \vlte  (C \vlpa D)}{\vlte\downarrow}
{( A\vlte C) \vlpa ( B \vlpa D)}{}\qquad \qquad

\odn{( A\vlpa B ) \vlte ( C\vlte D)}{\vlpa\uparrow}
{ (A \vlte C)\vlpa  (B \vlte D)}{}\\[15pt]

\odn{ (A \vlpa B) \triangleleft  (C \vlpa D)}{\triangleleft \downarrow}
{( A\triangleleft C) \vlpa ( B \triangleleft D)}{}\qquad \qquad

\odn{( A\triangleleft B ) \vlte ( C\triangleleft D)}{\triangleleft \uparrow}
{ (A \vlte C)\triangleleft  (B \vlte D)}{}\\

\end{tabular}}
\caption{$\SABV$}
\label{SABV}

\vspace{8pt}
\fbox{
\begin{tabular}{c} 
\odn{ \circ}{ai\downarrow}
{a \vlpa \bar a}{}\qquad \qquad \qquad \qquad \qquad \quad

\odn{a \vlte \bar a}{ai\uparrow}
{ \circ}{}\\[15pt]

\odn{( A\vlpa B ) \vlte C)}{s}
{ (A \vlte C)\vlpa  B}{}\\[15pt]

\odn{ (A \vlpa B) \triangleleft  (C \vlpa D)}{q\downarrow}
{( A\triangleleft C) \vlpa ( B \triangleleft D)}{}\qquad \qquad

\odn{( A\triangleleft B ) \vlte ( C\triangleleft D)}{q\uparrow}
{ (A \vlte C)\triangleleft  (B \vlte D)}{}\\
\end{tabular}}
\caption{$\SBV$ \cite{Gugl:06:A-System:kl} }
\label{SBV}
\vllinearfalse
\end{figure}

To reduce rules to their subatomic form, we will work in the setting of deep inference \cite{Gugl::Deep-Inf:uq}, where derivations can be composed with the same connectives as formulae. The deep inference methodology has been exploited in many ways, such as shortening analytic proofs by exponential factors with respect to Gentzen proofs \cite{BrusGugl:07:On-the-P:fk, Das:11:On-the-P:fk}, modelling process algebras \cite{Brus:02:A-Purely:wd, Kahr:05:Towards-:uz, Rove:11:Linear-L:kx, Rove:12:Extendin:uq} or typing optimised versions of the $\lambda$-calculus that provide a novel treatment of sharing and duplication \cite{GundHeijPari:13:Atomic-L:fk}. The particular property that most interests us is that rules can be applied at any depth inside a formula and as a result every contraction and cut instances can be locally transformed into their atomic variants by a local procedure of polynomial-size complexity \cite{BrunTiu:01:A-Local-:mz}. Therefore they can be transformed into their subatomic variants straightforwardly. 

We will present derivations in the open deduction formalism \cite{GuglGundPari::A-Proof-:fk}, which is a logic-independent formalism, allowing us to reach the desired level of generality. In open deduction, derivations can be presented in their `synchronal' form, where there is maximal parallelism between inference steps, and in sequential form, where inference steps are in a total order, and in all the intermediate forms. There are simple (non-confluent) algorithms that non-deterministically transform a derivation (in particular a synchronal one) into one among several sequential derivations with the same premiss and conclusion. Conversely, simple (confluent) algorithms transform derivations (in particular sequential ones) into synchronal derivations.

\begin{definition}
Given a subatomic systems $\SA$ and formulae $A$ and $B$, a \emph{derivation} $\phi$ in $\SA$ from \emph{premiss} $A$ to \emph{conclusion} $B$ denoted by $\odbox{\odv{A}{\phi}{B}{\SA}}$ is defined to be:  

\begin{itemize}
\item a formula $\phi \ideq A \ideq B$;
\item a \emph{composition by inference} 
\[
\phi\ideq \odn{\odv{A}{\phi_1}{A'}{\SA}}{\rho}{\odv{B'}{\phi_2}{B}{\SA}}{}
\]
 where $\rho$ is an instance of an inference rule in $\SA$ and $\phi_1$ and $\phi_2$ are derivations in $\SA$;
\item a \emph{composition by connectives}
\[
\phi\ideq \odbox{ \odv{A_1}{\phi_1}{B_1}{\SA} }\rels \odbox{\odv{A_2}{\phi_2}{B_2}{\SA}}
\]
where $\rels \in \Rts$, $A \ideq A_1 \rels A_2$, $B\ideq B_1 \rels B_2$, $\phi_1$ and $\phi_2$ are derivations in $\SA$.
\end{itemize}

We denote by
\[
\odbox{\odv{A}{\phi}{B}{\{\rho_1, \dots, \rho_n\}}}
\]
a derivation where only the rules $\rho_1, \dots, \rho_n$ appear.

Sometimes we omit the name of a derivation or the name of the proof system if there is no ambiguity.

To improve readability sometimes we remove the boxes around derivations.
\end{definition}

\begin{notation}
We consider the two derivations
\[
\odframefalse
\odn{(
\odn{\odv{A_1}{\phi_1}{A_2}{\SA}}{\rho_1}{\odv{B_1}{\phi_2}{B_2}{\SA}}{}
)}{\rho_2}
{\odv{C_1}{\phi_3}{C_2}{\SA}}{} \qquad \mbox{and} \qquad
\odn{\odv{A_1}{\phi_1}{A_2}{\SA}}{\rho_1}
{(\odn{\odv{B_1}{\phi_2}{B_2}{\SA}}{\rho_2}
{\odv{C_1}{\phi_3}{C_2}{\SA}}{})}{}
\]

to be equal and we denote them both by 
\[
\od{\odd{\odi{\odd{\odi{\odd{
\odh{A_1}}{\phi_1}{A_2}{\SA}}{\rho_1}
{B_1}{}}{\phi_2}
{B_2}{\SA}}{\rho_2}
{C_1}{}}{\phi_2}
{C_2}{\SA}} \qquad \mbox{.}
\]

\end{notation}

\begin{example}\label{exderiv}
The following is a $\SAKS$ derivation with premiss $(\fff \vlor \ttt) \as (\ttt \vlor \fff) \vlan ( (\fff \bs \ttt) \vlor \ttt) \vlan \fff$ and conclusion $((\fff \as \ttt) \vlan (\fff \bs \ttt) ) \vlor ((\ttt \as \fff) \vlor \ttt) \vlan \fff$:
\[
\odbox{\odn{\odn{(\fff \vlor \ttt) \as (\ttt \vlor \fff) }{\as\downarrow}{(\fff \as \ttt) \vlor (\ttt \as \fff)}{}
\vlan
((\fff \bs \ttt) \vlor \ttt)}{s}
{((\fff \as \ttt) \vlan (\fff \bs \ttt) ) \vlor ((\ttt \as \fff) \vlor \ttt)}{}} \vlan \fff
\]
\end{example}

\begin{definition}
Let $\odv{A}{\phi}{B}{\SA}$ and $\odv{B}{\psi}{C}{\SA}$ be two derivations. We define their \emph{composition} $\odt\phi{}\psi{}$ as the derivation constructed as follows:
\begin{itemize}
\item[-] if $\phi$ is a formula then $\odt\phi{}\psi{}\ideq\psi$ ; likewise if $\psi$ is a formula then  $\odt\phi{}\psi{}\ideq\phi$ ;
\item[-] if $\odframefalse\phi\ideq\odn{\phi_1}{}{\phi_2}{}$ then
 $\vls\odframefalse\odt\phi{}\psi{}\ideq\odframefalse
\odn{\phi_1                                  }
  {}{\odframefalse\vlnos\bigg(\odt{\phi_2}
                                {}\psi    {}
                        \bigg)              }{}
$ ;
 likewise if $\psi \ideq \odn{\psi_1}{}{\psi_2}{}$ then 
$\vls\odframefalse\odt\phi{}\psi{}\ideq\odframefalse
\odn{\odframefalse\vlnos\bigg(\odt{\phi}
                                {}{\psi_1}    {}
                        \bigg)                                   }
  {}{\psi_2           }{}
$ ;
\item[-] if $\vls\phi\ideq \phi_1 \rels \phi_2$ and $\vls\psi\ideq \psi_1 \rels \psi_2$ then $\vls\odframefalse\odt\phi{}\psi{}\ideq \odt{\phi_1}{}{\psi_1}{}\rels \odt{\phi_2}{}{\psi_2}{}$ .
\end{itemize}
\end{definition}


\begin{definition}
Let $\odv{A}{\phi}{B}{\SA}$ be a derivation, and $K\vlhole$ a context. We define the derivation $K\{\phi\}$ from $K\{A\}$ to $K\{B\}$ as the derivation obtained by inserting $\phi$ in the place of the hole in $K\vlhole$.

\end{definition}

\begin{example}
If $\phi=\odn{(\fff \vlor \ttt) \as (\ttt \vlor \fff) }{\as\downarrow}{(\fff \as \ttt) \vlor (\ttt \as \fff)}{}$ and $K\vlhole= (\ttt \vlan \vlhole) \vlor (\fff \vlan \fff)$, then 
\[
K\{\phi\}= \left(\ttt \vlan \odn{(\fff \vlor \ttt) \as (\ttt \vlor \fff) }{\as\downarrow}{(\fff \as \ttt) \vlor (\ttt \as \fff)}{} \right) \vlor (\fff \vlan \fff)
\quad .
\]
\end{example}

Sometimes we will work by induction on the number of rules on a derivation. For that, it is useful to impose an order on the rules to have a notion of which one is the `last' rule of the derivation. We impose this order by sequentialising the derivation. 

\begin{definition}
 Let $\odv{A}{\phi}{B}{}$ be a derivation. We define the sequential form of $\phi$ as follows by structural induction on $\phi$:

\begin{itemize}
\item[-] if $\phi\ideq A$ is a formula, then its sequential form is given by $A$ ;
\item[-] if $ \phi \ideq \odn{\odv{A}{\phi_1}{A'}{}}{\rho}{\odv{B'}{\phi_2}{B}{}}{}$, then we consider $\phi_1$ and $\phi_2$ in sequential form: 
\[
\phi_1= \od{\odi{\odi{\odh
        {A                                          }            }
{\rho_1}{\begin{array}{c}A_2\\\vdots\\A_n\\\end{array}}{}     }
{\rho_n}{A'                                      }{}}
\qquad \mbox{and} \qquad 
\phi_2 =  \od{\odi{\odi{\odh
        {B'                                          }            }
{\rho_{n+1}}{\begin{array}{c}B_2\\\vdots\\B_m\\\end{array}}{}     }
{\rho_m}{B                                     }{}}
\]
 and the sequential form of $\phi$ is given by
\[
\odframefalse \phi=\od{\odi{\odi{\odi{\odi{\odi{\odh{A}}
{\rho_1}{\vdots}{}}
{\rho_n}{A'}{}}
{\rho}{B'}{}}{\rho_{n+1}}
{\vdots}{}}{\rho_{n+m}}
{B}{}} \quad \mbox{.}
\]
\item[-] if $\phi \ideq \odbox{\odv{A_1}{\phi_1}{B_1}{}} \rels \odbox{\odv{A_2}{\phi_2}{B_2}{}}$, then we sequentialise $\phi_1$ and $\phi_2$ to obtain
\[
\phi_1= \od{\odi{\odi{\odh
        {A_1                                          }            }
{\rho_1}{\begin{array}{c}C_2\\\vdots\\C_n\\\end{array}}{}     }
{\rho_n}{B_1                                     }{}} \qquad \mbox{and} \qquad 
\phi_2= \od{\odi{\odi{\odh
        {A_2                                          }            }
{\rho_{n+1}}{\begin{array}{c}D_2\\\vdots\\D_m\\\end{array}}{}     }
{\rho_{n+m}}{B_2                                      }{}}\]
 and the sequential form of $\phi$ is given by 
\[
\phi = \od{
\odi{\odi{\odi{\odi{\odh{A_1 \rels A_2}}{\rho_1}
{\begin{array}{c}C_2 \rels A_2\\\vdots\\C_n \rels A_2\\\end{array}}{}}{\rho_n}
{B_1 \rels A_2}{}}{\rho_{n+1}}
{\begin{array}{c}B_1 \rels D_2 \\\vdots\\B_1 \rels D_m\\\end{array}}{}}{\rho_{n+m}}
{B_1 \rels B_2}{}}\quad . \]

\end{itemize}
\end{definition}

The sequential form is not a normal form: we can choose how to sequentialise a composition by connective, by starting from either side of the connective. However we make this choice, the number of rules in the sequential form of the derivation stays nonetheless equal to the number of inference rules in its open deduction form.

\begin{example}
The sequential form of the derivation $\phi$ of example \ref{exderiv} is:
\[
\phi=
\od{\odi{\odi{\odh{(((\fff \vlor \ttt) \as (\ttt \vlor \fff)) \vlan ((\fff \bs \ttt) \vlor \ttt)) \vlan \ttt}}{\as \downarrow}
{(((\fff \as \ttt) \vlor (\ttt \as \fff)) \vlan ((\fff \bs \ttt) \vlor \ttt)) \vlan \ttt}{}}{s}
{(((\fff \as \ttt) \vlan (\fff \bs \ttt)) \vlor ((\ttt \as \fff) \vlor \ttt))\vlan \ttt}{}} \qquad.
\]

\end{example}

For some results, such as the splitting theorem in Section \ref{Spli} it is useful to consider derivations modulo certain equalities. To simplify the presentation and the case analysis, we define the Calculus of Structures presentation.  This presentation provides us with a natural way of extending an equivalence relation between formulae to an equivalence relation between derivations.

\begin{definition}\label{cos}
Let $=_{\mathcal{G}}$ be an equivalence relation on $\For$ obtained from a subset $\mathcal{G}$ of the axioms that define $=$ as per Definition \ref{eq}.

If $C =_{\mathcal{G}} C'$, there is a derivation $\odv{C}{\zeta}{C'}{}$ where $\zeta$ is composed only of equality rules corresponding to the axioms of $=_{\mathcal{G}}$. We will denote such derivations by $\odN{C}{=_{\mathcal{G}}}{C'}{} \quad$.

A derivation in sequential form
\[
\phi=
\od{\odI{\odi{\odI{\odI{\odi{\odI{\odh{A_0}}{=_{\mathcal{G}}}{A_1}{}}{\rho_1}{A_2}{}}{=_{\mathcal{G}}}{ \begin{array}{c}A_3\\\vdots\\A_n\\\end{array} }{}}{=_{\mathcal{G}}}{A_{n+1}}{}}{\rho_m}{A_{n+2}}{}}{=_{\mathcal{G}}}{A_{n+3}}{}}
\]  

has \emph{Calculus of Structures (CoS) notation} for $=_{\mathcal{G}}$ given by
\[
\phi=\od{\odi{\odi{\odh{A_0}}{\rho_1}{\begin{array}{c}A_3\\\vdots\\A_{n+1}\\\end{array}}{}}{\rho_m}{A_{n+3}}{}} \quad . \]

We define the equivalence relation $=_{\mathcal{G}}$ on derivations as $\phi_1 =_{\mathcal{G}} \phi_2$ if 
\[
\phi_1=\od{\odi{\odi{\odh{A_0}}{\rho_1}{\begin{array}{c}A_1\\\vdots\\A_n\\\end{array}}{}}{\rho_{n+1}}{A_{n+1}}{}}\quad \mbox{and} \quad 
\phi_2=\od{\odi{\odi{\odh{A'_0}}{\rho_1}{\begin{array}{c}A'_1\\\vdots\\A'_n\\\end{array}}{}}{\rho_{n+1}}{A'_{n+1}}{}}\]
 in CoS notation for $=_{\mathcal{G}}$, with $A_i =_{\mathcal{G}} A'_i$ for every $0 \leq i \leq n+1$.
\end{definition}

\begin{example}
If $=_{\mathcal{G}}$ is the equivalence relation on the set of formulae $\Forcl$ for classical logic defined by the axiom $A \vlan \ttt=A$, then
\[
\od{\odi{\odi{\odh{(((\fff \vlor \ttt) \as (\ttt \vlor \fff)) \vlan ((\fff \bs \ttt) \vlor \ttt)) \vlan \ttt}}{\as \downarrow}
{(((\fff \as \ttt) \vlor (\ttt \as \fff)) \vlan ((\fff \bs \ttt) \vlor \ttt)) \vlan \ttt}{}}{s}
{(((\fff \as \ttt) \vlan (\fff \bs \ttt)) \vlor ((\ttt \as \fff) \vlor \ttt))\vlan \ttt}{}}
\quad =_{\mathcal{G}} \quad
\od{\odi{\odi{\odh{((\fff \vlor \ttt) \as (\ttt \vlor \fff)) \vlan ((\fff \bs \ttt) \vlor \ttt) }}{\as \downarrow}
{((\fff \as \ttt) \vlor (\ttt \as \fff)) \vlan ((\fff \bs \ttt) \vlor \ttt) }{}}{s}
{((\fff \as \ttt) \vlan (\fff \bs \ttt)) \vlor ((\ttt \as \fff) \vlor \ttt)}{}} \quad .
\]
\end{example}


\begin{definition}
Let $\unit\in \Uts$ be a distinguished constant. A \emph{proof} of $A$ is a derivation $\phi$ whose premiss is $\unit$. We denote proofs by $\odr{\phi}{A}{}$.
\end{definition}

For reasons of convention, the distinguished unit for each proof system might be denoted with a different symbol, as is the case for classical logic.

\begin{example}
A proof in $\SAKS$ is a derivation with premiss $\ttt$.
\vspace{4pt}

A proof in $\SAMLLS$ is a derivation with premiss $\one$.
\vspace{4pt}

A proof in $\SABV$ is a derivation with premiss $\one$.
\end{example}

\vspace{6pt}
In the next section we will focus on showing the admissibility of certain distinguished rules.

\begin{definition}
We say that an inference rule $\rho$ is \emph{admissible} for a proof system $\SA$ if $\rho\notin\SA$ and for every proof $
\vlsmash{
\odr{}A{\SA\cup\{\rho\}}
}
$ there exists a proof $\odr{}A{\SA}$.
\end{definition}

\section{Splitting}\label{Spli}

\vllineartrue
Cut-elimination via splitting has been shown to work in a vast array of deep inference systems: linear logic \cite{Stra:02:A-Local-:ul}, multiplicative exponential linear logic \cite{Stra:03:MELL-in-:oy}, the mixed commutative/non-commutative logic BV \cite{Gugl:06:A-System:kl} and its extension with linear exponentials NEL \cite{GuglStra:02:A-Non-co:dq} and classical predicate logic \cite{Brun:06:Cut-Elim:cq}. This generality points towards the fact that the splitting procedure hinges on some fundamental properties required for cut-elimination rather than on the specifics of each system.

In particular, cut-elimination proofs via splitting are very straightforward in those systems without contractions. This suggests that it is the properties of linear rules (as opposed to contraction rules) that enable us to prove cut-elimination. Indeed, the generalisation of the splitting procedure that we show in this section allows us to fully confirm these suspicions: it is precisely because of the properties of the linear rules that we are able to prove cut-elimination for systems where they are present. In this way, we will give sufficient conditions that guarantee cut-elimination for a full class of substructural logics, similarly to \cite{Beln:82:Display-:ss, wan:98:disp, Gore:98:Substruct} where conditions for a display calculus to enjoy cut elimination are presented, or to \cite{Rasga:05:Cut-elim} where conditions for propositional based logics in the sequent calculus are presented.

 The multiplicative connectives of linear logic \cite{Gira:87:Linear-L:wm} perfectly exemplify the properties that the splitting procedure hinges on. The introduction rule for the multiplicative conjunction $\vlte$ (tensor) is given in its sequent calculus presentation as follows:

\[
\odn{\vdash A, \Phi \ \ \vdash B, \Psi}{}
{\vdash {A \vlte B}, \Phi, \Psi}{} \quad \mbox{.}
\]

Reading bottom up, we see that the multiplicative conjunction $\vlte$ requires that the context  be divided between its hypotheses. There is no communication between $\Phi$ and $\Psi$ in the proof above the tensor rule where they are united.

\[
\toks0={0.5}
\vlderivation                                    {
\vltrf\Delta  {\vdash F\{A \vlte B\}, \Gamma  } {
\vlhy         {                               } }
                                                {
\vliin\vlte {}{\vdash {A \vlte B}, \Phi,\Psi  }{
\vlhy         {\vltreeder
               {\Pi_1}{\vdash A,\Phi}{}\quad{}}}
                                               {
\vlhy         {\vltreeder
               {\Pi_2}{\vdash B,\Psi}{}\quad{}}}}
                                                {
\vlhy         {                               } }{\the\toks0}}
\]

It is precisely this multiplicative rule shape that splitting hinges on. In the sequent calculus, the presence of a main connective allows us to know exactly which rules can be applied above a cut. In deep inference, this is not possible since any rule can be applied at any depth, and we therefore focus on the behaviour of the context around a cut to tackle cut-elimination. This allows us to have a better understanding of how the cut-elimination procedure changes the proof globally. If all the connectives of a system require a splitting of the context like the multiplicative tensor does, then we can keep track of exactly how the context around a connective behaves. This allows us to split a proof into independent subproofs above every rule, just like in the example above the proof is divided into $\Pi_1$ and $\Pi_2$ above the $\vlte$ introduction rule. 
Cut-elimination is then possible following a simple idea: to show that an atomic cut involving $a$ and $\bar a$ is admissible, we follow $a$ and $\bar a$ to the top of the derivation to find two independent subderivations, the premisses of which contain the dual of $a$ and the dual of $\bar a$ respectively. In this way we obtain two proofs that don't interact above the cut, that we can rearrange to get a new cut-free proof.

\[
\newbox\boxleft\newdimen\dimenleft
\dimenleft=1.3pc     
\setbox\boxleft=\hbox{$
     \odv{{\color{blue}H_a}\vlte\odframefalse\odn{\one            }
                                               {}{{\color{blue}-a}
                                                  \vlpa
                                                  {\color{red}a}  }{}}
       {}{{\color{blue}K_a}
          \vlpa
          {\color{red}a}  }{}                                         $%
\kern-\dimenleft}
\newbox\boxright\newdimen\dimenright
\dimenright=1.3pc     
\setbox\boxright=\hbox{\kern-\dimenright$
     \odv{\odframefalse\odn{\one           }
                         {}{{\color{red}-a}
                            \vlpa
                            {\color{teal}a}}{} \vlte{\color{teal}H_{-a}}}
       {}{{\color{red}-a     }
          \vlpa
          {\color{teal}K_{-a}}}{}                                         $}
\odn{\odbox{\box\boxleft}\kern\dimenleft
 \;    \vlte   \;
     \kern\dimenright\odbox{\box\boxright}
    }
  {}{{\color{blue}K_a}
     \vlpa\odframefalse\odn{{\color{red}a}\vlte{\color{red}-a}}
                         {}{\bot                             }{}
     \vlpa{\color{teal}K_{-a}}                                   }{}
\quad\xrightarrow{\text{splitting}}\quad
\odn{{\color{blue}H_a}
     \vlte\odframefalse
     \odn{\one            }
       {}{{\color{blue}-a}
          \vlpa
          {\color{teal}a} }{}
     \vlte
     {\color{teal}H_{-a}}    }
  {}{\odv{{\color{blue}H_a\vlte-a}}
       {}{{\color{blue}K_a}       }{}
     \vlpa
     \odv{{\color{teal}a\vlte H_{-a}}}
       {}{\color{teal}K_{-a}         }{}}{}
\]

Proofs of cut-elimination by splitting therefore rely on two main properties of a proof system: the \emph{dualities} present in it to ensure that each of the independent subproofs contains the dual of an atom involved in the cut, and the \emph{shape} of the linear rules ensuring that the two proofs remain independent above the cut. It is precisely a formal characterisation of these properties that we will provide, enabling us to understand why they are enough to guarantee cut-elimination. 
\vspace{2pt}
To follow a connective through the proof from the bottom to the top, we need its scope to widen.
Accordingly, we will consider systems where the shape of the rules ensures the widening of the scope. 
In what follows, we will characterise \emph{splittable systems},  \emph{i.e.}, systems with sufficient conditions to ensure cut-elimination through a splitting procedure. 

In $\SAMLLS^{\downarrow}$ the only connective whose scope may decrease from bottom to top is $\vlpa$. The property distinguishing $\vlpa$ from $\as$ and $\vlte$ is in fact that it is the \emph{minimal} connective: it is the connective that appears in all the rules that introduce dual connectives, such as
\[
\odn{ (A \vlpa B) \vlte  (C \vlpa D)}{\vlte\downarrow}
{( A\vlte C) \vlpa ( B \vlpa D)}{}
\quad \mbox{and} \quad 
\odn{( A\vlpa B ) \as ( C\vlpa D)}{\as\downarrow}
{ (A \as C)\vlpa  (B \as D)}{}
\quad.
\] 
In every propositional system with an identity rule that introduces dualities there is such a distinguished connective $\minr$. 
In splittable systems, we will require that the scope of all connectives except for the distinguished connective $\minr$ increases from the bottom to the top of proofs, and that there be a rule $u \minr \bar u= \unit$ for every constant $u$.

Furthermore, once we have decomposed a proof into independent subproofs we will want to compose them again in such a way that we obtain a new proof. To ensure that this is possible we will require that  $\unit \rels^M \unit=\unit$ for every ${\rels}$ and that $\minr$ be associative and commutative.

\begin{definition}\label{splittable}
A system $\SA^{\downarrow}$ is \emph{splittable} if:
\begin{itemize}
\item[1.] There is a strong connective $\maxr$ with unit $\unit$ and dual $\minr$ with unit $0$,
\item[2.] $\SA$ is uniquely composed of down-rules of the form 
 \[
\odn{(A \minr B) \rels (C \minr D) }{\rels\downarrow}{(A \rels C) \minr (B \rels^m D)}{}\quad ,
\]
for every connective ${\rels} \in \Rts$.
\item[3.] There is a constant assignment $u \minr \bar u=\unit$ for every unit $u \in \Uts$,
\item[4.] $\minr$ is associative and commutative,
\item[5.] $\unit \rels^M \unit=\unit$ for every ${\rels}$.
\end{itemize}
\end{definition}

\begin{remark}
From condition 3 in Definition \ref{splittable} and the closure of $=$ under negation, $\maxr$ is associative and commutative.
\end{remark}

\begin{example}
$\SAMLLS^{\downarrow}$ is splittable, and the minimal connective $\minr$ introducing dualities is $\vlpa$.

\vllinearfalse
The linear down fragment of classical logic $\SAKS^\downarrow$ of Figure \ref{SAKSm} together with the equality rules corresponding to the axioms of example \ref{exCLeq} is splittable. The minimal connective $\minr$  introducing dualities is $\vlor$.

The down fragment $\SABVU^\downarrow$ of the system $\SABVU$ given in  Figure \ref{SABV}  together with the equality rules corresponding to the axioms of example \ref{exBVeq} is splittable. The minimal connective $\minr$  introducing dualities is $\vlpa$.

Likewise, the down fragment of $\SABV$ given in the same figure is splittable.
\end{example}

\begin{figure}
\centering
\fbox{
\begin{tabular}{c} 

\odn{ (A\vlpa B) \as  (C\vlpa D)}{\as\downarrow}
{( A\as C ) \vlpa ( B\as D )}{}\qquad \qquad\\[20pt]

\odn{ (A \vlpa B) \vlte (C\vlpa D)}{\vlte\downarrow}
{ (A\vlte C)\vlpa (B \vlpa D)}{} \qquad \qquad\\

\end{tabular}}

\caption{System $\SAMLLS^{\downarrow}$}
\label{SAMLLSm}
\end{figure}

\begin{figure}
\vllinearfalse
\centering
\fbox{
\begin{tabular}{c} 
\odn{ (A\vlor B) \as  (C\vlor D)}{\as\downarrow}
{( A\as C ) \vlor ( B\as D )}{}\qquad \qquad\\[20pt]

\odn{ (A \vlor B) \vlan (C\vlor D)}{\vlan \downarrow}
{ (A\vlan C)\vlor (B \vlor D)}{} \qquad \qquad\\
\end{tabular}}
\caption{$\SAKS^{\downarrow}$}
\label{SAKSm}
\end{figure}

The idea behind the generalisation of splitting is simple: since the scope of a connective ${\rels} \neq {\minr}$ only widens when following it from the bottom to the top of a proof, given a proof
\[
\odr{\phi}{S\{A\rels B\}}{} \quad \mbox{,}
\]
we can follow $\rels$ all the way to the top to find that $\phi$ is of the form
\[
\odn{\odr{}{A \minr Q_1}{} \rels \odr{}{B \minr Q_2}{}}{\rels\downarrow}{\odv{(A \rels B) \minr (Q_1 \rels^m Q_2)}{}{S\{A \rels B\}}{}}{} \quad \mbox{.}
\]

In other words, the proof $\phi$ splits into two subproofs that have no interaction above $\rels\downarrow$.
\vspace{2pt}

We will obtain the admissibility of certain rules as a corollary of splitting by showing that we can rearrange these independent subproofs in a suitable way. In particular, we will show that the subatomic rule that corresponds to the atomic cut rule is admissible.

\vspace{2pt}
The proof of the splitting result is done in two steps for ease of reading: shallow splitting and context reduction, just as is standard in the literature.
 As noted in \cite{Gugl:06:A-System:kl} and in  \cite{Stra:03:Linear-L:lp}, the main difficulty of splitting is finding the right induction measure for every system. In the literature, each splitting theorem for each proof system uses a different induction measure tailored specifically for it.  By providing a general splitting theorem, we not only give a formal definition of what a splitting theorem is, but also give a new one-size-fits-all induction measure that works for every splittable system, taking the search for an induction measure out of the process for designing a proof system.

\begin{lemma}\label{dual}
If $\SA^{\downarrow}$ is splittable, then for every proof 
\[ \odr{\phi}{u \minr C}{\SA^{\downarrow}}\]
 where $u\in \Uts$, there is a derivation 
\[\odv{\bar u}{\psi}{C}{\SA^{\downarrow}}\quad.\]

\end{lemma}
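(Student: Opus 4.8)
The plan is to build $\psi$ directly from $\phi$, wrapping a constant-size collar of equality steps and one logical step around it; no induction on $\phi$ is needed. The idea is that $\bar u$ is an ``inverse'' of $u$ for $\maxr$: the two units of $\minr$, namely $0$ and $\overline{\unit}$, coincide (each is a unit of $\minr$), and negating the constant assignment $u \minr \bar u = \unit$ of condition~3 gives $\bar u \maxr u = \overline{u \minr \bar u} = \overline{\unit} = 0$. The rule $\maxr\downarrow$ --- the instance of $\rels\downarrow$ for $\rels := \maxr \in \Rts$, which is available by Definition~\ref{splittable}(2) --- is then exactly what brings $\bar u$ next to $u$ in the conclusion of $\phi$ so that the two annihilate.

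Concretely, I would first form the derivation $\bar u \maxr \phi$ by composition by the connective $\maxr$, with the trivial derivation on $\bar u$ on the left and $\phi$ on the right; this is a derivation from $\bar u \maxr \unit$ to $\bar u \maxr (u \minr C)$, and by the unit equation for $\maxr$ its premiss equals $\bar u$. Using the unit equation for $\minr$ inside the left argument of $\maxr$, I rewrite the conclusion as $(\bar u \minr 0) \maxr (u \minr C)$ and apply $\maxr\downarrow$, obtaining $(\bar u \maxr u) \minr (0 \maxr^m C)$. Since $\maxr$ is strong, $\maxr^m = \minr$, so this is $(\bar u \maxr u) \minr (0 \minr C)$; finally I close off with the equality steps $\bar u \maxr u = 0$ and $0 \minr (0 \minr C) = C$ (two applications of the unit of $\minr$), which gives the required $\odv{\bar u}{\psi}{C}{\SA^{\downarrow}}$. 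Every formula occurring is a genuine $\SA$ formula since $\bar u, 0, \unit \in \Uts$, and $\psi$ uses only equality rules, one instance of $\maxr\downarrow$, and the rules already in $\phi$.

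There is no real obstacle: the argument is bookkeeping with units and with the polarity and negation conventions. The one point that needs care is deriving $\bar u \maxr u = 0$ from condition~3 by closure of $=$ under negation (with the attendant identification $\overline{\unit} = 0$), and checking that the single nontrivial inference step, $\maxr\downarrow$, is indeed guaranteed by splittability. It is also worth recording that the construction is uniform --- it subsumes the degenerate cases, such as $u$ being the unit $0$ or $C$ being already equal to $\unit$, with no separate analysis --- and that it enlarges $\phi$ by only a constant number of rule instances, in keeping with splitting being size-linear.
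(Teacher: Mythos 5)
Your construction is correct and is essentially the paper's own proof: the paper likewise forms $(\bar u \minr 0) \maxr \phi$, applies a single instance of $\maxr\downarrow$ to obtain $(\bar u \maxr u) \minr 0 \minr C$, and closes with the equalities $\bar u \maxr u = 0$ and the unit of $\minr$. The only cosmetic difference is that you insert the unit equations after composing with $\phi$ rather than writing them into the premiss, and you spell out the identification $\overline{\unit}=0$ that the paper leaves implicit.
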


\begin{proof}
We take
\[
\psi \ideq
\odn{(\bar u \minr 0) \maxr \odr{\phi}{u \minr C}{}}{\maxr\downarrow}
{\odn{\bar u \maxr u}{=}{0}{} \minr 0 \minr C}{} \qquad.
\]
\end{proof}

\begin{definition}
We define $=_{\minr}$ as the equivalence relation on formulae defined by the axioms for the associativity, commutativity, unit of $\minr$ and constant assignments for $\minr$.

We define the equivalence relation $=_{\minr}$ on derivations following Definition \ref{cos}.
\end{definition}

\newcommand\size[1]{\mathopen|#1\mathclose|_{\minr}}
  \vlupdate\size
\begin{definition}
Given a derivation $\phi$, we define the \emph{length} of $\phi$ as the number of rules in $\phi$ different from the equality rules for the associativity and commutativity of $\minr$, the unit rule for $\minr$ and the unit assignments for $\minr$. We denote it by $\size{\phi}$.
 \end{definition}

It is straightforward that if $\phi =_{\minr} \psi$, then $\size{\phi}=\size{\psi}$.

\begin{theorem}[Shallow Splitting]\label{ShSpl}
If $\SA^{\downarrow}$ is splittable, for every formulae $A$, $B$, $C$,  for every connective ${\rels} \neq {\minr}$, 
for every proof 
\[
\odr{\phi}{(A \rels B) \minr C}{\SA^{\downarrow}} \]
 there exist formulae $Q_1$, $Q_2$ and derivations
 \[ 
 \odv{Q_1 \relso Q_2}{\psi}{C}{\SA^{\downarrow}} \quad \mbox{ ,} \quad
 \odr{\phi_1}{A \minr Q_1}{\SA^{\downarrow} }\quad \mbox{ and } \quad
 \odr{\phi_2}{B \minr Q_2}{\SA^{\downarrow}} \quad \mbox{,}
 \]
 with $\size{\phi_1}+\size{\phi_2} \leq \size{\phi}\qquad$.

\end{theorem}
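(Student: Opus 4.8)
The plan is to prove Shallow Splitting by \emph{strong induction on} $\size{\phi}$, the length measure just defined, working throughout modulo $=_{\minr}$ in the CoS notation of Definition~\ref{cos}; thus $\size{\phi}$ counts exactly the non-$=_{\minr}$ rule instances of $\phi$, and removing the bottom rule of a sequentialised $\phi$ strictly decreases it. The base case $\size{\phi}=0$ is vacuous: a derivation from $\unit$ using only $=_{\minr}$-steps can only reach $\minr$-combinations of constants, whereas $(A\rels B)\minr C$ contains the non-constant subformula $A\rels B$, whose main connective $\rels$ is not $\minr$, and no axiom of $=_{\minr}$ (associativity, commutativity, unit of $\minr$, constant assignments for $\minr$) can erase such a subformula; hence $\unit\neq_{\minr}(A\rels B)\minr C$.

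For the inductive step, sequentialise $\phi$ and let $\rho$ be its bottom rule instance --- an instance of some $\relt\downarrow$ or of an equality rule not in $=_{\minr}$ --- so that $\phi'$, the derivation above $\rho$, is a proof with $\size{\phi'}=\size{\phi}-1$. I distinguish the position of $\rho$'s redex relative to the tracked occurrence $A\rels B$ (a $\minr$-leaf of the conclusion). \textbf{(1)} If $\rho$ acts strictly inside $A$ (symmetrically, inside $B$), then $\phi'$ proves $(\tilde A\rels B)\minr C$, where $\tilde A$ is obtained from $A$ by replacing the conclusion of $\rho$ with its premiss; apply the induction hypothesis to $\phi'$, get $Q_1,Q_2,\psi,\phi_1',\phi_2'$, and set $\phi_1$ to be $\phi_1'$ extended below by $\rho$ acting inside the context $\vlhole\minr Q_1$, $\phi_2:=\phi_2'$, keeping $\psi$; then $\size{\phi_1}+\size{\phi_2}=(\size{\phi_1'}+1)+\size{\phi_2'}\le\size{\phi'}+1=\size{\phi}$, which is the tight case. \textbf{(2)} If $\rho$'s redex is disjoint from $A\rels B$ (it lies inside $C$), apply the induction hypothesis to $\phi'$ and re-attach $\rho$ at the foot of $\psi$; $\phi_1,\phi_2$ and the bound are inherited unchanged. \textbf{(3)} If $\rho$'s redex contains the occurrence $A\rels B$, then $\rho$ is a $\relt\downarrow$ whose conclusion is $(X\relt Z)\minr(Y\relt^m W)$ and $A\rels B$ is one of its $\minr$-leaves. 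Either $A\rels B$ is the leaf $X\relt Z$, which forces $\relt=\rels$, $X\ideq A$, $Z\ideq B$ (the \emph{principal} subcase) and makes $\phi'$, modulo $=_{\minr}$, a proof of $\bigl((A\minr Y)\rels(B\minr W)\bigr)\minr C''$ with $C=_{\minr}C''\minr(Y\rels^m W)$; or $A\rels B$ is a leaf of $Y\relt^m W$, which, $A\rels B$ being $\minr$-irreducible, forces either $Y\relt^m W\ideq A\rels B$, whence $\relt^m=\rels$ and $\relt\in\{\rels,\relso\}$, or $\relt^m=\minr$, whence $\relt=\maxr$ and $A\rels B$ is a $\minr$-leaf of $Y$ or of $W$. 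In every variant I peel the outer connective first: apply the induction hypothesis to $\phi'$ (strictly smaller) for the connective $\relt$ --- in the $\maxr$ variant, for $\maxr$ and then once more, to the resulting sub-proof that now exhibits $A\rels B$ as a $\minr$-leaf, for $\rels$ --- thereby obtaining the required $\phi_1$ of $A\minr Q_1$ and $\phi_2$ of $B\minr Q_2$ directly, with $\size{\phi_1}+\size{\phi_2}$ bounded by the size of the innermost proof to which we recurse, hence by $\size{\phi'}\le\size{\phi}$. It only remains to assemble $\psi\colon Q_1\relso Q_2\to C$ from the companion pieces returned by those calls, using one application of $\relso\downarrow$ (or $\rels\downarrow$, or $\maxr\downarrow$), the identities $\unit\rels^M\unit=\unit$, and the associativity, commutativity and unit equalities for $\minr$ and $\maxr$.

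The main obstacle is subcase~(3), and especially the variant $\relt=\maxr$: there the tracked $\rels$ lies underneath a $\maxr$-conjunct of $\phi'$, and splitting that $\maxr$ inevitably \emph{demotes} it to $\minr$, so rebuilding the \emph{exact} context $C$, which still contains a $\maxr$, requires grafting the companion sub-proof back in through a fresh $\maxr$ (legitimate precisely because $\unit\maxr\unit=\unit$) and then re-applying $\maxr\downarrow$ to recover the lost connective. Orchestrating the nested inductive calls so that every piece type-checks and the length bound $\size{\phi_1}+\size{\phi_2}\le\size{\phi}$ stays tight --- case~(1) being the extremal one --- is the delicate bookkeeping of the argument, and it is exactly here that conditions~3--5 of splittability (the constant assignments $u\minr\bar u=\unit$, associativity and commutativity of $\minr$, and $\unit\rels^M\unit=\unit$) are used.
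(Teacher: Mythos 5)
Your overall strategy is the paper's: induction on $\size{\phi}$ in CoS notation modulo $=_{\minr}$, a case analysis on the bottom rule $\rho$ organised by where its redex sits relative to the tracked occurrence $A\rels B$, and a correct treatment of the genuinely structural cases --- in particular you correctly isolate the principal $\rels\downarrow$ case and the $\maxr\downarrow$ case in which $A\rels B$ lies inside one conjunct, which is indeed where $C$ must be reassembled through a fresh $\maxr\downarrow$ using conditions 3--5 of splittability.

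The gap is in your case~(3). You assert that if the redex of $\rho$ contains the occurrence $A\rels B$ then $\rho$ is an instance of some $\relt\downarrow$. That is false: since you quotient only by $=_{\minr}$, the equality rules for every connective other than $\minr$ remain explicit inference steps, and several of them have redexes containing (or equal to) $A\rels B$. You are missing: (i)~the constant assignment $u=v\rels w$, where $\phi'$ proves $u\minr C$ and nothing of the form $(\cdot\rels\cdot)\minr\cdot$ is available to the induction hypothesis --- this, not your vacuity argument for $\size{\phi}=0$, is the effective base case, and it is settled by Lemma~\ref{dual} (condition~3 plus one $\maxr\downarrow$), which turns the proof of $u\minr C$ into a derivation from $\bar v\relso\bar w=\bar u$ to $C$; (ii)~the unit equalities $X\relt u_{\relt}=X$ with $A\rels B$ inside $X$, which again need Lemma~\ref{dual} to absorb the $u_{\relt}$ component before a second appeal to the induction hypothesis; and (iii)~associativity and commutativity of $\rels$ itself, e.g.\ $(A\rels B_1)\rels B_2=A\rels(B_1\rels B_2)$, where one must apply the induction hypothesis twice and manufacture $\phi_2$ by gluing the two resulting subproofs with a \emph{new} $\rels\downarrow$ instance --- together with your case~(1) this is exactly where the bound $\size{\phi_1}+\size{\phi_2}\le\size{\phi}$ is tight, so it cannot be waved through. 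These are cases (3), (4) and (10)--(15) of the paper's proof, more than half of its case analysis, and each of (i)--(iii) needs an idea not present in your sketch.
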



\begin{proof}

Given a proof $\phi$ in $\SA$ of $(A \rels B) \minr C$ we reduce it to CoS notation for $=_{\minr}$. We will proceed by induction on $\size{\phi}$.

\vspace{3pt}

If $\size{\phi}=1$, then  $A=_{\minr} v, B=_{\minr} w$ and $v \rels w =_{\minr} u$, with $u \minr C =_{\minr} \unit$. By Lemma \ref{dual}, there is a derivation $\odv{\bar u}{\psi'}{C}{\SA^{\downarrow}}$ and we take:
\[
\psi \ideq 
\od{\odd{\odi{\odh{\bar v \relso \bar w}}{=}{\bar u}{}}{\psi'}{C}{}} \qquad, 
\qquad \phi_1 \ideq
 \odn{\unit}{=_{\minr} }{\odN{v}{=_{\minr}}{A}{} \minr \bar v}{} \quad \mbox{and} \quad
 \phi_2 \ideq
 \odn{\unit}{=_{\minr} }{\odN{w}{=_{\minr}}{B}{} \minr \bar w}{} \qquad.
\]

\vspace{4pt}
If $\size{\phi}=\size{\phi'}>1$, we prove the inductive step for all the possible cases of the bottom inference rule $\rho$ of $\phi$.

Inspection of the rules provides us with the following possible cases:

\begin{itemize}

\item[(1)]$\phi =_{\minr}                       \od{\odi{\odp
 {\phi'}{(A\rels B) \minr C'}{\SA^{\downarrow}}     }
{\rho}{(A\rels B) \minr C}{  }}  \qquad;$

\item[(2)]$\phi =_{\minr}                      \od{\odi{\odp
 {\phi'}{ (((A \rels B) \minr C_1)\maxr (C_2 \minr C_3))\minr C_4}{\SA^{\downarrow}}     }
{\maxr\downarrow}
{(A \rels B) \minr C_2 \minr (C_1 \maxr C_3) \minr C_4}{  }}\qquad \mbox{;}$

\item[(3)]$\phi =_{\minr}                      \od{\odi{\odp
 {\phi'}{ (((A \rels B) \minr C_1 )\relt u_{\relt}) \minr C_2}{\SA^{\downarrow}}     }
{=}
{(A \rels B) \minr C_1 \minr C_2 }{  }}\qquad \mbox{;}$

\item[(4)]$\phi =_{\minr}                      \od{\odi{\odp
 {\phi'}{ (u_{\relt} \relt ((A \rels B) \minr C_1 )) \minr C_2}{\SA^{\downarrow} }    }
{=}
{(A \rels B) \minr C_1 \minr C_2 }{  }}\qquad \mbox{;}$

\item[(5)]$\phi =_{\minr}             \od{\odi{\odp{\phi'}{(A'\rels B)\minr C}{\SA^{\downarrow}}    }
{\rho }{(A \rels B)\minr C}{    }}
\qquad;$

\item[(6)]$\phi =_{\minr}   
         \od{\odi{\odp
{\phi'}{(A\rels B') \minr C}{\SA^{\downarrow}}     }
{\rho }{(A\rels B)\minr C}{    }}
\qquad ;$

\item[(7)] $\phi =_{\minr}                       \od{\odi{\odp
 {\phi'}{ ((A \minr C_1) \rels (B \minr C_2))\minr C_3}{\SA^{\downarrow}}     }
{\rels\downarrow}
{((A \rels B) \minr (C_1  \relso C_2)) \minr C_3}{  }}\qquad \mbox{ if } \rels \mbox{ is strong ;}$

\item[(8)] $\phi =_{\minr}                     \od{\odi{\odp
 {\phi'}{ ((A \minr C_1) \relso (B \minr C_2))\minr C_3}{\SA^{\downarrow}}     }
{\relso\downarrow}
{((A \relso B) \minr (C_1  \rels C_2)) \minr C_3}{  }}\qquad \mbox{ if } \rels \mbox{ is weak ;}$

\item[(9)]$\phi =_{\minr}                       \od{\odi{\odp
 {\phi'}{ ((A \minr C_1) \rels (B \minr C_2))\minr C_3}{\SA^{\downarrow}}     }
{\rels\downarrow}
{((A \rels B) \minr (C_1  \rels C_2)) \minr C_3}{  }}\qquad \mbox{ if } \rels \mbox{ is weak ;}$

\item[(10)]$\phi =_{\minr}                       \od{\odi{\odp
 {\phi'}{(B \rels A)\minr C}{\SA^{\downarrow} }    }
{=}{(A \rels B) \minr C}{}}\qquad$ if $\rels$ is commutative ;

\item[(11)]$\phi =_{\minr}                       \od{\odi{\odp
 {\phi'}{( (A \rels B_1) \rels B_2 ) \minr C}{\SA^{\downarrow}}    }
{=}{( A \rels (B_1 \rels B_2) )\minr C}{}} \qquad$ if $\rels$ is associative ;

\item[(12)] $ \phi =_{\minr}                            \od{\odi{\odp
 {\phi'}{( A_1 \rels (A_2 \rels B) ) \minr C}{\SA^{\downarrow}}    }
{=}{( (A_1 \rels A_2) \rels B)  \minr C}{}} \qquad$ if $\rels$ is associative ;

\item[(13)] $ \phi =_{\minr}                            \od{\odi{\odp
{\phi'}{A \minr C}{\SA^{\downarrow}}}{=}
{(A \rels u_{\rels})\minr C}{}} \qquad$ if $\rels$ is unitary, with $B=_{\minr} u_{\relt}$ ;

\item[(14)] $ \phi =_{\minr}                            \od{\odi{\odp
{\phi'}{B \minr C}{\SA^{\downarrow}}}{=}
{(u_{\rels} \rels B)\minr C}{}} \qquad$ if $\rels$ is unitary,  with $A=_{\minr} u_{\relt}$ ;

\item[(15)]  $ \phi =_{\minr}                            \od{\odi{\odp
{\phi'}{u \minr C}{\SA^{\downarrow}}}{=}
{(v \rels w)\minr C}{}} \qquad$  with $A=_{\minr}v$ and $B=_{\minr} w$ .
\end{itemize}

We proceed as follows:

\begin{itemize}
\item[(1)] We can apply the induction hypothesis to $\phi'$ as $\size{\phi'}<\size{\phi}$. 

There are derivations

\[
      \psi=_{\minr}        \od{\odi{\odd{\odh
    {Q_1 \relso Q_2}                }
{\psi'}  {C'}{\SA^{\downarrow}}     }
\rho {C}     {    }}
\quad\mbox{,}\qquad
\odr{\phi_1}{A \minr Q_1}{\SA^{\downarrow}}
\qquad\mbox{and}\qquad
\odr{\phi_2}{B \minr Q_2}{\SA^{\downarrow}}
\quad\mbox{,}
\]
with $\size{\phi_1}+\size{\phi_2} \leq \size{\phi}<\size{\phi}$ .

\item[(2)] We can apply the induction hypothesis to $\phi'$ as $\size{\phi'}<\size{\phi}$.

There are derivations
\[
\vls\odframefalse
\psi=_{\minr}
\odv{ H_1 \minr H_2}{\psi'}{C_4}{\SA^{\downarrow} }            \quad \mbox{,}  \qquad
\odr{\omega_1}{ (A \rels B) \minr C_1 \minr H_1}{\SA^{\downarrow}}    \qquad
\mbox{and}\qquad
\odframefalse
\odr{\omega_2}{ C_2 \minr C_3 \minr H_2}{\SA^{\downarrow}}
      \quad \mbox{,}
\]

with $\size{\omega_1}+\size{\omega_2} \leq \size{\phi''}$.

We apply the induction hypothesis to $\omega_1$ as  $\size{\omega_1} \leq \size{\phi'}<\size{\phi}$.

There are derivations
\[
\vls\odframefalse
\odv{ Q_1 \relso Q_2}{\psi''}{C_1 \minr H_1}{\SA^{\downarrow}}             \quad \mbox{,}  \quad
\odr{\phi_1}{ A \minr Q_1}{\SA^{\downarrow}}    \quad
\mbox{,}\quad
\odframefalse
\odr{\phi_2}{  B \minr Q_2}{\SA^{\downarrow}}
      \quad \mbox{,}
\]
 with  $\size{\phi_1}+\size{\phi_2}\leq \size{\omega_1}<\size{\phi}$.

We take:

\[ \psi =_{\minr}
\odn{\odv{Q_1 \relso Q_2}
{\psi''}{C_1 \minr H_1}{}
 \maxr 
\odr{\omega_2}{ C_2 \minr C_3 \minr H_2}{}}
{\maxr\downarrow}
{(C_1 \maxr C_3) \minr C_2 \minr \odv{H_1 \minr H_2}{\psi'}{C_4}{}}{}  \quad \mbox{.}
\]

\item[(3)]We can apply the induction hypothesis to $\phi'$ as $\size{\phi'}<\size{\phi}$. There are derivations

\[
\vls\odframefalse
\odv{H_1 \relto H_2}{\psi'}{C_2}{\SA^{\downarrow}}                \quad \mbox{,}  \quad
\odr{\omega_1}{ (A \rels B) \minr C_1 \minr H_1}{\SA^{\downarrow}}   \quad
\mbox{,}\quad
\odframefalse
\odr{\omega_2}{  u_{\relt} \minr H_2}{\SA^{\downarrow}}
\quad \mbox{,}
\]

with $\size{\omega_1}+\size{\omega_2} \leq \size{\phi'}$.

By Lemma \ref{dual}, there is a derivation

\[
\vls\odframefalse
\odv{ \bar u_{\relt}}{\psi''}{H_2}{\SA^{\downarrow}}             \quad \mbox{.}
\]

\vspace{2pt}

We apply the induction hypothesis to $\omega_1$ as $\size{\omega_1}\leq \size{\phi'}<\size{\phi}$. There are derivations
\[
\vls\odframefalse
\odv{ Q_1 \relso Q_2}{\psi'''}{C_1 \minr H_1}{\SA^{\downarrow}}             \quad \mbox{,}  \quad
\odr{\phi_1}{ A \minr Q_1}{\SA^{\downarrow}}    \quad
\mbox{,}\quad
\odframefalse
\odr{\phi_2}{  B \minr Q_2}{\SA^{\downarrow}}
      \quad \mbox{,}
\]
 with  $\size{\phi_1}+\size{\phi_2}\leq \size{\omega_1}<\size{\phi}$.

We take:
\[
\psi=_{\minr}
\od{\odd{\odh{Q_1 \relso Q_2}}{\psi'''}{C_1 \minr \odv{H_1 \relto \odv{\bar u_{\relt}}{\psi''}{H_2}{}}{\psi'}{C_2}{}}{}} \qquad.
\]

\item[(4)] This case is analogous to (3).

\item[(5)] We can apply the induction hypothesis to $\phi'$ as $\size{\phi'}<\size{\phi}$. There are derivations

\[
\odv{Q_1 \rels^m Q_2}{\psi}{C}{\SA^{\downarrow}}
\quad\mbox{,}\qquad
\phi_1 \ideq \odr{\phi'_1}{\odn{A'}\rho A{} \minr Q_1}{\SA^{\downarrow}}
\qquad\mbox{and}\qquad
\odr{\phi_2}{B \minr Q_2}{\SA^{\downarrow}}
\quad\mbox{,}
\]
with $\size{\phi'_1}+\size{\phi_2} \leq \size{\phi'}$.

We have $\size{\phi_1}+\size{\phi_2}=\size{\phi'_1}+1+\size{\phi_2} \leq \size{\phi'}+1=\size{\phi}$.

\item[(6)] This case is analogous to (5).

\item[(7)] We can apply the induction hypothesis to $\phi'$ as $\size{\phi'}<\size{\phi}$. There are derivations

\[
\vls\odframefalse
\odv{ H_1 \relso H_2}{\psi'}{C_3}{\SA^{\downarrow}}             \quad \mbox{,}  \quad
 \odr{\phi_1}{ A \minr C_1 \minr H_1}{\SA^{\downarrow}}    \quad
\mbox{and}\quad
\odframefalse
 \odr{\phi_2}{  B \minr C_2  \minr H_2}{\SA^{\downarrow}}
      \quad \mbox{,}
\]

with $\size{\phi_1}+\size{\phi_2}\leq \size{\phi'}<\size{\phi}$.

We take $Q_1\ideq C_1 \minr H_1$, $Q_2\ideq C_2 \minr H_2$ and 

\[
\psi =_{\minr}
\od{\odi{
\odh{(C_1 \minr H_1) \relso (C_2 \minr H_2)}}{\relso\downarrow}
{ (C_1 \relso C_2) \minr \od{\odd{\odh{H_1 \relso H_2}}{\psi'}{C_3}{}}}{}}\qquad \mbox{.}
\]

\item[(8)] This case is analogous to (7).

\item[(9)] This case is analogous to (7).

\item[(10)] We can apply the induction hypothesis to $\phi'$ as $\size{\phi'}<\size{\phi}$. There are derivations

\[
\vls\odframefalse
\odv{ H_1 \relso H_2}{\psi'}{C}{\SA^{\downarrow}}             \quad \mbox{,}  \qquad
\odr{\omega_1}{ B \minr H_1 }{\SA^{\downarrow}}    \qquad
\mbox{and}\qquad
\odframefalse
\odr{\omega_2}{A \minr H_2}{\SA^{\downarrow}}
      \quad \mbox{,}
\]

with $\size{\omega_1}+\size{\omega_2} \leq \size{\phi'}$.

We take $Q_1\ideq H_2$, $Q_2\ideq H_1$, $\phi_1 \ideq \omega_2$, $\phi_2 \ideq \omega_1$ and

 \[
 \psi \ideq \odv{\odn{H_2 \relso H_1}{=}{H_1 \relso H_2 }{}}{\psi'}{C}{}\quad \mbox{.}
  \]

\item[(11)] We can apply the induction hypothesis to $\phi'$ as $\size{\phi'}<\size{\phi}$. There are derivations

\[
\vls\odframefalse
\odv{ H_1 \relso H_2}{\psi'}{C}{\SA^{\downarrow}}             \quad \mbox{,}  \qquad
\odr{\omega_1}{ (A \rels B_1)\minr H_1 }{\SA^{\downarrow}}    \qquad
\mbox{and}\qquad
\odframefalse
\odr{\omega_2}{  B_2 \minr H_2}{\SA^{\downarrow}}
      \quad \mbox{,}
\]

with $\size{\omega_1}+\size{\omega_2} \leq \size{\phi'}$.

We apply the induction hypothesis to $\omega_1$ as $\size{\omega_1} \leq \size{\phi'} < \size{\phi}$. There are

\[
\vls\odframefalse
\odv{ Q_1 \relso H_3}{\psi''}{H_1}{\SA^{\downarrow}}             \quad \mbox{,}  \quad
\odr{\phi_1}{  A \minr Q_1 }{\SA^{\downarrow}}    \quad
\mbox{,}\quad
\odframefalse
\odr{\omega_3}{  B_1 \minr H_3 }{\SA^{\downarrow}}
      \quad \mbox{,}
\]

with $\size{\phi_1}+ \size{\omega_3} \leq \size{\omega_1}$.

We take $Q_2\ideq H_3 \relso H_2$ and

\[
\phi_2 \ideq \odn{\odr{\omega_3}{B_1 \minr H_3}{} \rels^M
\odr{\omega_2}{B_2 \minr H_2}{}}
{\rels\downarrow}
{(B_1 \rels B_2) \minr (H_3 \relso H_2)}{} 
\quad ,\quad
\psi \ideq
\odv{\odn{Q_1 \relso (H_3 \relso H_2)}{=}
{\odv{Q_1 \relso H_3}{\psi''}{H_1}{} \relso H_2}{}}{\psi'}
{C}{} \quad \mbox{.}
\]
We have $\size{\phi_1}+\size{\phi_2} = \size{\phi_1}+\size{\omega_3}+\size{\omega_2}+1 \leq \size{\omega_1}+ \size{\omega_2}+1 \leq \size{\phi'}+1 =\size{\phi}$.

\item[(12)] This case is analogous to (11).

\item[(13)] We take $Q_1 \ideq C$, $Q_2\ideq \bar u_{\rels}$ and
\[
\psi \ideq \odn{C \relso \bar u_{\rels}}{=}{C}{} \qquad , \qquad \phi_1\ideq \odr{\phi'}{A \minr C}{} \qquad ,
\phi_2 \ideq \odn{\unit}{=_{\minr}}{\odN{u_{\rels}}{=_{\minr}}{B}{} \minr \bar u_{\rels}}{} \qquad.
\]

Then, $\size{\phi_1}+\size{\phi_2} = \size{\phi'} < \size{\phi}$.

\item[(14)] This case is analogous to (13).

\item[(15)] By Lemma \ref{dual}, there is a derivation $\odv{\bar u}{\psi'}{C}{\SA^{\downarrow}}$ and we take:
\[
\psi \ideq 
\od{\odd{\odi{\odh{\bar v \relso \bar w}}{=}{\bar u}{}}{\psi'}{C}{}} \qquad, 
\qquad \phi_1 \ideq
 \odn{\unit}{=_{\minr} }{\odN{v}{=_{\minr}}{A}{} \minr \bar v}{} \quad \mbox{and} \quad
 \phi_2 \ideq
 \odn{\unit}{=_{\minr} }{\odN{w}{=_{\minr}}{B}{} \minr \bar w}{} \qquad.
\]

\end{itemize}
\end{proof}

We can see that shallow splitting hinges precisely on the shape of the rules ${\rels}\!\downarrow$ and on the duality between constants.

\begin{remark}
The requirement for $\minr$ to be associative and commutative can be relaxed, with the condition that the rule ${\maxr} \!\downarrow$ be restricted in such a way that it corresponds to two rules
\[
\odn{(A \minr B) \maxr C}{}{(A \maxr C) \minr B}{} \qquad \mbox{and} \qquad \odn{A \maxr (B \minr C)}{}{B \minr (A \maxr C)}{} \qquad.
\]
\end{remark}

We can apply shallow splitting to the outermost connective in any context $S$, and continue applying it inductively to split any proof into independent subproofs completely. This process is formalised in the following Theorem \ref{CtxtRed}, which is a generalisation of Theorem~4.1.5 in \cite{Gugl:06:A-System:kl}.

\begin{definition}
We say that a context $H\vlhole$ is \emph{provable} if $H\{1\}=1$.
\end{definition}

\begin{theorem}[Context Reduction]\label{CtxtRed}

Let $\SA^{\downarrow}$ be a splittable system. For any formula $A$ and for any context $S\vlhole$, given a proof $\odr{\phi}{S\{A\}}{\SA^{\downarrow}}$, 
there exist a formula $K$, a provable context $H\vlhole$ and derivations
\[
\odr{\zeta}{A \minr K}{\SA^{\downarrow}} \qquad \mbox{and} \qquad 
\odv{H\{\vlhole \minr K\}}{\chi}{S\vlhole}{\SA^{\downarrow}} \qquad .
\]

\end{theorem}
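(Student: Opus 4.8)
The plan is to prove Context Reduction by induction on the context $S\vlhole$, peeling off its connectives one at a time with Shallow Splitting (Theorem~\ref{ShSpl}), starting from the connective on the root-to-hole path that is closest to the root. The induction measure will be $d(S\vlhole)$, the number of connectives different from $\minr$ that occur on the path from the root of $S$ to its hole; since $\minr$ is associative and commutative (condition~4 of Definition~\ref{splittable}) and its constant assignments only rewrite constant, hence hole-free, subformulae, $d$ is invariant under $=_{\minr}$, which is what makes the recursion well-founded. For the base case $d(S\vlhole)=0$, every connective on the path is $\minr$, so modulo $=_{\minr}$ we have $S\vlhole=_{\minr}\vlhole\minr D$ for some formula $D$; it suffices to take $K\ideq D$, $H\vlhole\ideq\vlhole$ (provable, since $H\{\unit\}=\unit$), $\zeta$ equal to $\phi$ followed by the $=_{\minr}$-rules rewriting its conclusion into $A\minr D$, and $\chi$ the $=_{\minr}$-derivation of $S\vlhole$ from $\vlhole\minr D$.

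For $d(S\vlhole)\geq 1$, I would let $\rels\neq\minr$ be the connective on the root-to-hole path closest to the root and, collapsing with $=_{\minr}$ the $\minr$-nodes above it, write $S\vlhole=_{\minr}(S'\vlhole\rels C)\minr D$ with $d(S'\vlhole)=d(S\vlhole)-1$ (the symmetric case $S\vlhole=_{\minr}(C\rels S'\vlhole)\minr D$ being treated the same way, with $Q_1$ and $Q_2$ swapped below). Composing $\phi$ with $=_{\minr}$-rules gives a proof of $(S'\{A\}\rels C)\minr D$, to which I apply Shallow Splitting with respect to $\rels$, obtaining formulae $Q_1,Q_2$ and derivations $\psi\colon Q_1\relso Q_2\to D$, $\xi_1\colon\unit\to S'\{A\}\minr Q_1$ and $\xi_2\colon\unit\to C\minr Q_2$. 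Since $d(S'\vlhole\minr Q_1)=d(S'\vlhole)=d(S\vlhole)-1$, the induction hypothesis applies to $\xi_1$, read as a proof of $(S'\vlhole\minr Q_1)\{A\}$, and yields a formula $K'$, a provable context $H'\vlhole$, and derivations $\zeta'\colon\unit\to A\minr K'$ and $\chi'\colon H'\{\vlhole\minr K'\}\to S'\vlhole\minr Q_1$. I then set $K\ideq K'$ and $\zeta\ideq\zeta'$.

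What remains is to build a provable context $H\vlhole$ and a derivation $\chi\colon H\{\vlhole\minr K\}\to S\vlhole$. When $\rels$ is strong or self-dual, so that $\rels^M=\rels$ and $\rels^m=\relso$, I take $H\vlhole\ideq H'\vlhole\rels\unit$, which is provable because $H\{\unit\}=\unit\rels\unit=\unit$ by condition~5. For $\chi$ I compose, from top to bottom: $\chi'\rels\unit$, from $H'\{\vlhole\minr K'\}\rels\unit$ to $(S'\vlhole\minr Q_1)\rels\unit$; then $\xi_2$ inserted by composition with $\rels$ in place of the right-hand $\unit$, reaching $(S'\vlhole\minr Q_1)\rels(C\minr Q_2)$; then the rule $\rels\downarrow$, reaching $(S'\vlhole\rels C)\minr(Q_1\relso Q_2)$; then $\psi$ inserted by composition with $\minr$, reaching $(S'\vlhole\rels C)\minr D$; and finally the $=_{\minr}$-rules down to $S\vlhole$. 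The premiss of $\chi$ is $H'\{\vlhole\minr K'\}\rels\unit=H\{\vlhole\minr K\}$, as required. The case where $\rels$ is strictly weak is dual, using $\relso$ (which is then strong) in place of $\rels$ at the points where the unit law and the polarity of the peeled rule are invoked.

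The main obstacle is precisely this reconstruction of the pair $(H,\chi)$ in the inductive step, and within it the requirement that $H$ be a \emph{provable} context: conditions~3--5 of Definition~\ref{splittable} are used there essentially (condition~3 also feeding Shallow Splitting through Lemma~\ref{dual}), and nothing weaker seems to suffice. The secondary subtlety is the polarity bookkeeping when $\rels$ is strictly weak: Shallow Splitting hands us $\psi$ phrased with the De~Morgan dual $\relso$, whereas $\rels\downarrow$ produces the weak member $\rels^m$ in its conclusion, so the strong-case computation above has to be mirrored on the dual side, using the unit law $\unit\relso\unit=\unit$ for $\relso$ rather than for $\rels$. Finally, the groundwork of the whole induction — that the decomposition $S\vlhole=_{\minr}(S'\vlhole\rels C)\minr D$ can always be arranged, and that $d$ is genuinely $=_{\minr}$-invariant — is routine but load-bearing and should be spelled out carefully.
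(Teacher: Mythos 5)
Your proof follows the paper's own argument essentially step for step: the same induction on the number of non-$\minr$ connectives above the hole, the same base case, the same use of Shallow Splitting to peel off the outermost non-$\minr$ connective on the path to the hole, the same application of the induction hypothesis to the subproof containing the hole, and the same reconstruction $H\vlhole = H'\vlhole \rels^M \unit$ with the down-rule, $\xi_2$ and $\psi$ reassembled into $\chi$. The only cosmetic difference is that you treat the strong and weak polarities as separate (dual) cases where the paper handles both at once via the $\rels^M$ notation together with commutativity of $\minr$, so there is nothing substantive to compare.
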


\newcommand\sizerel[1] {|#1|_{ \minr}}

\begin{proof}
We proceed by induction on the number of connectives ${\rels} \neq {\minr}$ that $\vlhole$ is in the scope of in $S\vlhole$. We denote it by $\sizerel{S}$.
\vspace{2pt}

If $\sizerel{S}=0$, then $S\{A\}=_{\minr} A \minr K$ and we take $\zeta=_{\minr} \phi$ and $H\vlhole=\vlhole$.
\vspace{2pt}
If $S\{A\}=_{\minr} (S'\{A\} \relt B) \minr C$ with $\relt \neq \minr$, we apply Theorem \ref{ShSpl} to $\phi$. 
There exist derivations
 \[\odv{Q_1 \relto Q_2}{\psi}{C}{\SA^{\downarrow}} \quad \mbox{,} \quad 
 \odr{\phi_1}{ S'\{A\}\minr Q_1}{\SA^{\downarrow}} \quad \mbox{and} \quad 
 \odr{\phi_2}{ B \minr Q_2}{\SA^{\downarrow}} \quad .
 \]

 \vspace{2pt}
 
 We apply the induction hypothesis to $\phi_1$ since $\sizerel{S'}< \sizerel{S}$. 
 There are derivations
 \[
 \odr{\zeta}{A \minr K}{\SA^{\downarrow}}\quad \mbox{,} \quad 
 \odv{H'\{ \vlhole \minr K\}}{\chi'}{S'\vlhole \minr Q_1}{\SA^{\downarrow}} \quad \mbox{,}
 \]
 with $H'$ a provable context.

\vspace{2pt}
 
 We take $H\vlhole =H'\vlhole \relt^M \unit \ $. We have $H\{\unit\}=H'\{\unit\} \relt^M \unit=\unit \relt^M \unit=\unit$, and
 we can build in $\SA^{\downarrow}$
 \[
\chi \ideq
 \odn{\odv{H'\{ \vlhole \minr K\}}{\chi'}{S'\vlhole \minr Q_1}{} \relt^M \odr{\phi_2}{B \minr Q_2 }{}}{\relt\downarrow}
 {(S'\vlhole \relt B) \minr \odv{Q_1 \relto Q_2}{\psi}{C}{}}{} \quad \mbox{.}
\]

 \vspace{4pt}
 
We proceed likewise if $S\{A\}=_{\minr}(B \relt S'\{A\}) \minr C$.

\end{proof}

As a corollary of shallow splitting and context reduction we can show the admissibility of a class of up-rules. The main idea is that through splitting we can separate a proof into ``building blocks'' that are independently provable. We can then easily combine these building blocks differently to obtain a new proof with the same conclusion.

\begin{definition}
Rules of the form $\odn{(A\rels B) \maxr (C\rels^M D)}{\rels\uparrow}{(A\maxr C) \rels (B\maxr D)}{}$ are \emph{cuts}.
\end{definition}

\begin{corollary}[Admissibility of cuts]\label{adm}
Let $\SA$ be a splittable proof system.

For any formulae $A, B, C, D$, any context $S$, any connective $\rels \neq \minr$, given a proof
\[
\phi \ideq \odr{\phi'}{S\left\{\odframefalse \odn{(A\rels B) \maxr (C\rels^M D)}{\rels\uparrow}{(A\maxr C) \rels (B\maxr D)}{}\right\}}{\SA^{\downarrow} }\qquad\mbox{,}
\]
 there is a proof  
 \[
 \odr{\pi}{S\{(A\maxr C) \rels (B\maxr D)\}}{\SA^{\downarrow}  }\qquad\mbox{,}
 \]
i.e., \ cuts are admissible.

\end{corollary}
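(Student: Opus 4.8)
The plan is to strip the context $S$ off the proof with Context Reduction, dismantle what remains by three nested uses of Shallow Splitting, and then reassemble the pieces respecting the dualities of the system; everything is carried out in CoS notation for $=_{\minr}$, so that reassociating and commuting $\minr$ costs nothing. Since the displayed cut occurs inside the context $S$, the portion of $\phi'$ above it is an $\SA^{\downarrow}$-proof $\psi_0$ of $S\{(A\rels B)\maxr(C\rels^M D)\}$. First I would apply Theorem~\ref{CtxtRed} to $\psi_0$ with distinguished formula $(A\rels B)\maxr(C\rels^M D)$, obtaining a formula $K$, a provable context $H\vlhole$, an $\SA^{\downarrow}$-proof $\zeta$ of $((A\rels B)\maxr(C\rels^M D))\minr K$, and an $\SA^{\downarrow}$-derivation $\chi$ from $H\{\vlhole\minr K\}$ to $S\vlhole$. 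It then suffices to construct an $\SA^{\downarrow}$-proof $\eta$ of $((A\maxr C)\rels(B\maxr D))\minr K$: since $H\{\unit\}=\unit$, stacking $H\{\eta\}$ on the equality-only proof of $H\{\unit\}$ and composing with $\chi$ — its hole filled by $(A\maxr C)\rels(B\maxr D)$ — yields the required $\pi$.

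To build $\eta$, apply Theorem~\ref{ShSpl} to $\zeta$ along the strong connective $\maxr$ (legitimate because $\maxr$ is strong while $\minr$ is weak, so $\maxr\neq\minr$): this returns $Q_1,Q_2$, a derivation $\psi_1$ from $Q_1\minr Q_2$ to $K$, and proofs $\xi_1$ of $(A\rels B)\minr Q_1$ and $\xi_2$ of $(C\rels^M D)\minr Q_2$. Apply Theorem~\ref{ShSpl} again to $\xi_1$ along $\rels$ (which is $\neq\minr$ by hypothesis), getting $R_1,R_2$, a derivation $\psi_A$ from $R_1\relso R_2$ to $Q_1$, and proofs $\alpha_1$ of $A\minr R_1$, $\alpha_2$ of $B\minr R_2$; and apply it to $\xi_2$ along $\rels^M$ (also $\neq\minr$, $\rels^M$ being strong), getting $T_1,T_2$, a derivation $\psi_C$ from $T_1\rels^m T_2$ to $Q_2$ (the dual of $\rels^M$ being the weak member $\rels^m$ of the pair), and proofs $\gamma_1$ of $C\minr T_1$, $\gamma_2$ of $D\minr T_2$. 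Using $\unit\maxr\unit=\unit$ — condition~5 of Definition~\ref{splittable}, as $\maxr^M=\maxr$ — the composite $\alpha_1\maxr\gamma_1$ followed by ${\maxr}\downarrow$ is an $\SA^{\downarrow}$-proof $\beta_1$ of $(A\maxr C)\minr(R_1\minr T_1)$, and symmetrically there is a proof $\beta_2$ of $(B\maxr D)\minr(R_2\minr T_2)$.

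It remains to $\rels$-join the $A\maxr C$ factor of $\beta_1$ with the $B\maxr D$ factor of $\beta_2$. The subtle point, and the reason the argument branches on the polarity of $\rels$, is that joining two proofs by a connective $\rels$ requires $\unit\rels\unit=\unit$, which splittability supplies only for the strong member $\rels^M$ of the pair $\{\rels,\relso\}$. If $\rels$ is strong, form $\beta_1\rels\beta_2$ (premiss $\unit\rels\unit=\unit$), apply ${\rels}\downarrow$ to reach $((A\maxr C)\rels(B\maxr D))\minr\bigl((R_1\minr T_1)\relso(R_2\minr T_2)\bigr)$, then inside the right $\minr$-component successively apply ${\relso}\downarrow$ (obtaining $(R_1\relso R_2)\minr(T_1\relso T_2)$), $\psi_A$, $\psi_C$ and $\psi_1$, collapsing that component to $K$. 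If $\rels$ is weak, first rewrite $\beta_1,\beta_2$ by commuting $\minr$ into proofs of $(R_1\minr T_1)\minr(A\maxr C)$ and $(R_2\minr T_2)\minr(B\maxr D)$, form $\beta_1\relso\beta_2$ (premiss $\unit\relso\unit=\unit$, since $\relso=\rels^M$), and apply ${\relso}\downarrow$: because $\relso^m=\rels$ for weak $\rels$, this produces $\bigl((R_1\minr T_1)\relso(R_2\minr T_2)\bigr)\minr\bigl((A\maxr C)\rels(B\maxr D)\bigr)$, and the left $\minr$-component is collapsed to $K$ exactly as before, so commuting $\minr$ gives $\eta$. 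A self-dual $\rels$ is covered by either branch. All rules used lie in $\SA^{\downarrow}$, so $\pi$ is as required; proofs containing several cuts are handled by iterating this single-cut elimination.

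The step I expect to be the real obstacle is precisely the polarity bookkeeping of the reassembly: one has to keep track, for every dual pair, of which connective is strong, since this dictates both which connective may legally join two sub-proofs (the one for which $\unit\cdot\unit=\unit$) and which of $\rels$, $\rels^m$, $\relso$ appears in the conclusion of each down-rule one invokes. A naive uniform attempt — joining $\beta_1$ and $\beta_2$ with $\rels$ regardless of polarity — breaks for weak non-self-dual $\rels$, and the repair above forces the correct choice of parenthesization of the $\minr$-debris carried by $\beta_1$ and $\beta_2$. By comparison, the triple use of Shallow Splitting, the verification that the unit equalities $\unit\maxr\unit=\unit$ and $\unit\rels^M\unit=\unit$ are available, and the check that nothing escapes $\SA^{\downarrow}$ are routine, if bookkeeping-heavy.
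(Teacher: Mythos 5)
Your proposal is correct and follows essentially the same route as the paper: Context Reduction to strip $S$, Shallow Splitting applied once along $\maxr$ and then once more to each of the two resulting subproofs along $\rels$ and $\rels^M$, followed by reassembly with ${\maxr}\downarrow$, a join by the strong member of the pair, and ${\relso}\downarrow$ to collapse the $\minr$-residue through $\psi_A$, $\psi_C$ and $\psi_1$ into $K$ before re-inserting into $H$ and $\chi$. The only difference is presentational: your explicit case split on the polarity of $\rels$ is absorbed in the paper into the notational convention for ${\rels^M}\downarrow$ with a commutative $\minr$ in the premiss, which silently performs the same commutation of the $\minr$-components that you carry out by hand in the weak branch.
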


\begin{proof}
 We apply Theorem \ref{CtxtRed} to $\phi$.

There are derivations
\[
\odr{\zeta}{((A\rels B) \maxr (C\rels^M D)) \minr K}{\SA^{\downarrow}} \qquad \mbox{and} \qquad 
\odv{H\{ \vlhole \minr K \}}{\chi}{S\vlhole}{\SA^{\downarrow}}\quad \mbox{,}
\]
with $H\{\unit\}=\unit$.

We apply Theorem \ref{ShSpl} to $\zeta$. There exist derivations
 \[ 
 \odv{Q_1 \minr Q_2}{\psi}{K}{\SA^{\downarrow} } \quad \mbox{ ,} \quad
 \odr{\phi_1}{ (A\rels B) \minr Q_1}{\SA^{\downarrow}} \quad \mbox{ and } \quad
 \odr{\phi_2}{ (C \rels^M D) \minr Q_2}{\SA^{\downarrow}} \quad \mbox{.}
 \] 
 
 We apply Theorem \ref{ShSpl} to $\phi_1$ and $\phi_2$ and we obtain
  \[ 
 \odv{Q_A \relso Q_B}{\psi_1}{Q_1}{\SA^{\downarrow}} \quad \mbox{ ,} \quad
 \odr{\phi_3}{Q_A\minr A}{\SA^{\downarrow} } \quad \mbox{ and } \quad
 \odr{\phi_4}{Q_B\minr B}{\SA^{\downarrow}} \quad \mbox{,}
 \] 
 
  \[ 
 \odv{Q_C \rels^m Q_D}{\psi_2}{Q_2}{\SA^{\downarrow}} \quad \mbox{ ,} \quad
 \odr{\phi_5}{Q_C\minr C}{\SA^{\downarrow} } \quad \mbox{ and } \quad
 \odr{\phi_6}{Q_D\minr D}{\SA^{\downarrow} } \quad \mbox{.}
 \]

 We can then build the following proof in $\SA^{\downarrow}$
\[ 
\pi = 
\odv{H\left\{ \odframefalse
\odn{\odframetrue
\odn{
\odr{\phi_3}{A \minr Q_A}{} 
\maxr
 \odr{\phi_5}{C \minr Q_C}{}}
 {\maxr\downarrow}
{ (A \maxr C) \minr Q_A \minr Q_C}{}
 \rels^M 
\odn{
\odr{\phi_4}{B \minr Q_B}{} \maxr
 \odr{\phi_6}{D \minr Q_D}{}}
 {\maxr \downarrow}
{(B \maxr D) \minr Q_B \minr Q_D}{}
}
{\rels^M\downarrow}
{ \odframetrue
 ((A \maxr C) \rels (B \maxr D)) \minr
\od{\odd{\odi{\odh{(Q_A \minr Q_C) \relso (Q_B \minr Q_D)}}
{\relso\downarrow}
{\odv{Q_A \relso Q_B}{\psi_1}{Q_1}{}
\minr
\odv{Q_C \rels^m Q_D}{\psi_2}{Q_2}{}}{}}
{\psi}{K}{}}
}{}
\right\}}
{\chi}{
S \{ (A \maxr C) \rels (B \maxr D)\}}{} \qquad.
\]

\end{proof}

\begin{example}
By Corollary \ref{adm} the up fragment of $\SAMLLS$ is admissible, as is the up fragment of system $\SABVU$. 
This extends to system $\SABV$ where the units are identified. 

\vllinearfalse
Likewise, the up rules 
\[ \odn{(A \as B) \vlan (C \as D)}{\as \uparrow}{(A \vlan C) \as (B \vlan D)}{} \quad \mbox{and} \quad \odn{(A \vlor B) \vlan (C \vlan D)}{\vlor\uparrow}{(A \vlor C) \vlan (B \vlor D)}{}\]
are admissible in system $\SAKS^\downarrow$. 

\end{example}
\vspace{6pt}

The splitting procedure is therefore a very general phenomenon: it can be applied to systems with any number of connectives and units as long as certain basic equations are satisfied.

\section{Interpreting subatomic systems}\label{int}

To translate the subatomic formulae into the `usual' formulae and apply the above results to existing proof systems, we can define a simple interpretation map. The intuitive idea behind the translation is to interpret a certain assignment of units inside an atom as a positive occurrence of the atom, and the dual assignment as a negative occurrence of the atom. For example, for classical logic we interpret $\fff \as \ttt$ as a positive occurrence of the atom $a$ and $\ttt \as \fff$ as a negative one. In this way, the formula $A\ideq ((\fff \as \ttt) \vlor \ttt) \vlan (\ttt \bs \fff)$ is interpreted as $A' \ideq (a \vlor \ttt) \vlan \bar b$.

We can view the constants in the scope of an atom as a superposition of truth values. $\fff \as \ttt$ is the superposition of the two possible assignments for the atom $a$ and $\ttt \as \fff$ the superposition of the two assignments for $\bar a$. We can project onto a specific assignment by choosing which ‘side’ we read: if we read the values on the left of the atom we assign $\fff$ to
$a$ and $\ttt$ to $\bar a$ and if we read the ones on the right we assign $\ttt$ to $a$ and $\fff$ to $\bar a$.

\begin{definition}
Let $\Gor$ be the set of formulae of a propositional logic $L$. We say that the set of  subatomic formulae $\For$ is \emph{natural} for $L$ if there is a partition on the set of connectives $\Rts= \Ats \cup \Rts'$ with $\Ats \cap \Rts'=\varnothing$,such that:
\begin{itemize}
\item there is an injective map from the constants of $\Gor$ to the constants in $\Uts$;
\item there is a one to one correspondence between the connectives in $\Gor$ and the connectives in $\Rts'$;
\item there is a one to one correspondence between the set of unordered pairs of dual atoms $\{a, \bar a\}$ in $\Gor$ and the set of connectives $\Ats$.
\end{itemize}

We call the connectives in $\Ats$ \emph{atoms} as well. For each distinct pair of dual atoms we give a polarity assignment: we call one atom of the pair \emph{positive}, and the other one \emph{negative}. We will denote the atom of $\Ats$ corresponding to each pair with the same letter as the positive atom of the pair.

 We will denote the constants of $\Uts$ and the connectives in $\Rts'$ with the same symbols as their counterparts in $\Gor$.
\end{definition}

\begin{example}
The sets of subatomic formulae defined in examples \ref{exCLeq} and \ref{exMLLeq} are natural for classical logic, multiplicative linear logic and $\BV$ respectively.
\end{example}

The notion of interpretation map is easily extended to all logics for which we define a subatomic logic in the natural way. 

\begin{definition}\label{natural}

Let $\Gor$ be the set of formulae of a propositional logic $L$ with negation denoted by $\overline{\cdot}$ and equational theory denoted by $=$.
 Let $\For$ be the set of subatomic formulae with constants $\Uts$ and connectives $\Rts$ with negation denoted by  $\overline{\cdot}$ and equational theory denoted by $=$. 
A surjective partial function $I: \For \rightarrow \Gor$ is called \emph{interpretation map}. The domain of definition of $I$ is the \emph{set of interpretable formulae} and is denoted by $\For^i$.
If $F\ideq I(A)$, we say that $F$ is the \emph{interpretation} of $A$, and that $A$ is a \emph{representation} of $F$.
\vspace{2pt}

We extend the notion of interpretability to contexts: we say that $S\vlhole$ is interpretable if $S\{A\}$ is interpretable for every interpretable $A$.
\vspace{2pt}

 If $\For$ is natural for $L$, we say that an interpretation $I : \For^i \rightarrow \Gor$ is \emph{natural} when:
\begin{itemize}
\item $I(u)\ideq u$ for every constant $u$ of $\Gor$;
\item $\forall \alpha \in \Rts'$, if $A,B\in \For^i$ then $A \rels B \in \For^i$ and $ I(A \rels B)\ideq I(A) \rels I(B)$;
\item For some distinguished constants $u_1,u_2\in \Uts$, for all ${\as} \in \Ats$, $I(u_1 \as u_2)\ideq a$ and $I(u_2 \as u_1) \ideq \bar a$.
\end{itemize}

We define the \emph{natural representation $R: \Gor \rightarrow \For$ associated to $I$} for every formula $G\in \Gor$ inductively on the structure of $G$ by:
\begin{itemize}
\item $R(u)\ideq u$ if $u$ is a constant;
\item $R(a)\ideq u_1 \as u_2$  if $a$ is a positive atom;
\item $R(b) \ideq u_2 \as u_1$ if $b \ideq \bar a$ is a negative atom;
\item $R(A \rels B)\ideq R(A) \rels R(B)$ for every connective $\rels$ of $\Gor$.
\end{itemize}

For every formula $A\in \For$, $I(R(A))\ideq A$.
\end{definition}

\begin{example}\label{exCLint}
A natural interpretation for the set of subatomic formulae for classical logic defined in example \ref{exCL} is given by considering the assignments:
\[
\begin{array}{ll}
-   I(\ttt)\ideq \ttt\mbox{ ;} \quad & -   I(\fff)\ideq \fff\mbox{ ;}\\[6pt]
-  \forall {\as} \in \Ats .  I( \fff \as \fff)\ideq \fff \mbox{ ;}\quad  & -  \forall {\as} \in \Ats .  I(\ttt \as \ttt)\ideq \ttt \mbox{ ;}\\
-  \forall {\as} \in \Ats .  I( \fff \as \ttt)\ideq a \mbox{ ;}\quad  & -  \forall {\as} \in \Ats .  I(\ttt \as \fff)\ideq \bar a \mbox{ ;}\\[6pt]
-  I(A \vlor B)\ideq I(A) \vlor I(B) \mbox{ ;} \quad &-   I(A \vlan B)\ideq I(A) \vlan I(B)\mbox{ ;}\\
\end{array}
\]
where $A, B \in \For^i$, and extending it in such a way that $A \as B$ is interpretable iff $A=u$, $B=v$ with $u,v \in \{\fff, \ttt \}$ and then $I(A \as B)\ideq I(u \as v)$.

\vspace{4pt}
For example, if $A \ideq (((\fff \vlan \ttt) \as \ttt) \vlor \ttt) \vlan (\ttt \bs \fff)$, its interpretation is $I(A)= (a \vlor \ttt) \vlan \bar b$.

\vspace{4pt}

Note that the set $\For^i$ of interpretable formulae is composed by all formulae equal to a formula where an atom does not occur in the scope of another atom. Every other formula is not interpretable, such as $B \ideq ((\ttt \bs \fff) \vlan \ttt) \as \fff$.

\end{example}

\begin{example}\label{exMLLint}
A natural interpretation for the set of subatomic formulae for multiplicative additive linear logic defined in example \ref{exMLL} is given by considering the assignments:
\[
\begin{array}{ll}
-  I(\one)\ideq\one \mbox{ ;} &-   I(\bot)\ideq\bot \mbox{ ;}\\[6pt]
-  \forall {\as} \in \Ats .  I( \bot \as \bot)\ideq \bot \mbox{ ;}\quad  & -  \forall {\as} \in \Ats .  I(\one \as \one)\ideq \one \mbox{ ;}\\
-  \forall {\as} \in \Ats .  I( \bot \as \one)\ideq a \mbox{ ;}\quad  & -  \forall {\as} \in \Ats .  I(\one \as \bot)\ideq\bar a \mbox{ ;}\\[6pt]
-  I(A \vlpa B)\ideq I(A) \vlpa I(B) \mbox{ ;} \quad &-  I(A \vlte B)\ideq I(A) \vlte I(B)\mbox{ ;}\\
\end{array}
\]
where $A, B \in \For^i$, and extending it in such a way that $A \as B$ is interpretable iff $A=u$, $B=v$ with $u,v \in \{\bot, \one \}$ and then $I(A \as B)\ideq I(u \as v)$.

\vspace{4pt}

For example, for $C\ideq ((\one \vlpa \bot) \as \one)\vlte \bot$, $I(C)=a \vlte \bot$.

\vspace{4pt}

The formulae that are not interpretable are not only those equal to a formula where an atom occurs in the scope of another atom, but also those where a formula made up of units not equal to $\one$ or $\bot$ occurs in the scope of an atom, such as $(\one \vlpa \one) \as \bot$.
\vllinearfalse
\end{example}

We can easily extend the notion of interpretability to derivations.

\begin{definition}
Given an interpretation map $I$ for $\SA$, a derivation is \emph{interpretable} if every formula appearing in its sequential form is interpretable.
\end{definition}

To study proof theory through subatomic proof systems, we need to have a notion of
derivations equivalent to that of the `regular' theory. For that, we will establish a notion of
correspondence between subatomic systems and deep inference systems. In a \emph{natural}
proof system every `ordinary' proof will have a corresponding subatomic proof, and
every subatomic proof where every step has an interpretation will correspond to an
`ordinary' proof. 

In particular, in this paper we are interested in those systems where the interpretation map is built in a natural way, such that `ordinary' derivations correspond to subatomic derivations where no inference rule occurs in the scope of an atom. We call such derivations \emph{tame}.

\begin{definition}
We say that an interpretable derivation $\phi$ in $\SA$ is \emph{tame} if the only instances of rules in the scope of an atom are equality rules.
\end{definition}

\begin{example}
The derivation
\[
\odn{(\bot \vlpa \one) \as (\bot \vlpa \one)}{\as\downarrow}{(\bot \as \bot) \vlpa (\one \as \one)}{} \as \bot
\]
in $\SAMLLS$ is interpretable but is not tame.

\vspace{6pt}

The derivation 
\[
\odn{\one}{=}{(\bot \vlpa \one)}{} \as \bot
\]
is tame.
\end{example}

Note that the composition of tame derivations by any connective that is not an atom yields a tame derivation.

\begin{definition}
 Let $\SA$ be a subatomic system with a natural interpretation $I$ into the set $\Gor$ of formulae of a complete proof system $\Sy$ for a propositional logic L, with associated representation map $R$.

We say that $\SA$ is \emph{natural} for $\Sy$ when:
\begin{itemize}
\item for every tame $\SA$ derivation $\psi$ with premiss $P$ and conclusion $C$, there is a derivation $\psi'$ in $\Sy$ with premiss $I(P)$ and conclusion $I(C)$; and
\item for every derivation $\phi$ in $\Sy$ with premiss $P'$ and conclusion $C'$, there is a tame derivation $\phi'$ in $\SA$ with premiss $R(P')$ and conclusion $R(C')$.

\end{itemize}

\end{definition}

\begin{example}\label{SAKScomple}

$\SAKS^\downarrow$ of Figure \ref{SAKSm} is straightforwardly natural for the linear down fragment of classical logic of Figure \ref{SKS} given by the rules $ai\!\downarrow$ and $s$. See \cite{Aler:16:A-Study-:hc} for details.

\end{example}

\begin{example}
Likewise, system $\SAMLLS$ of Figure \ref{SAMLLS} is natural for the multiplicative fragment of system $\SLLS$ given in Figure \ref{SMLLS}.

\end{example}

\begin{example}
System $\SABV$ of Figure \ref{SABV} is correct for system $\SBV$ given in Figure \ref{SBV}. 
\end{example}

To apply the splitting result above to existing deep inference systems with a natural representation in a subatomic system, we simply need to verify that tameness is preserved by splitting.  
We can show that under simple conditions tameness is preserved by the splitting procedure, and therefore cut-free proofs obtained from tame proofs will be tame themselves. 

\begin{definition}
We say that a system $\SA$ with a natural interpretation $I$, negation $\overline{\cdot}$ and an equational theory $=$ is \emph{preservable} when:
\begin{itemize}
\item[1.] If $A$ is interpretable and $A=_{\minr} B$, then $B$ is interpretable ;
\item[2.] If $A \rels B$ is interpretable, $\rels \in \Rts$, then $A$ and $B$ are interpretable ;
\item[3.] If $A \as B$ is interpretable and $A \minr A'= 1$, $B \minr B'= 1$ then $A' \as B'$ is interpretable for ${\as} \in \Ats$ ;
\item[4.] If $A$ is interpretable, then $\overline{A}$ is interpretable ;
\item[5.] The atoms of $\Ats$ are non-commutative, non-associative and non-unitary.
\end{itemize}

\end{definition}

These conditions ensure that interpretability is preserved by duality, meaning that if an instance of a rule is interpretable, the same rule instantiated with the duals of the formulae involved is interpretable as well.

\begin{proposition}
If $SA^{\downarrow}$ is preservable and $\phi$ is tame, then the derivations $\phi_1, \phi_2$ and $\psi$ obtained from Theorem \ref{ShSpl} are tame. 
Furthermore, if $\rels$ is an atom then $\phi_1$ and $\phi_2$ are equalities.
\end{proposition}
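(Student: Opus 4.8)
The plan is to prove the proposition by induction on $\size{\phi}$, following exactly the case analysis of the proof of Theorem~\ref{ShSpl}, and checking in each case that tameness is preserved and that the measure claim about atoms holds. The key observation driving the argument is that the five conditions of preservability are precisely what is needed so that, whenever a rule instance appearing in $\phi$ is interpretable, so is the instance of that rule obtained by replacing the formulae involved by their negations, and so is any instance of a ${\rels}\!\downarrow$ rule on the dual side — this is what licenses the various uses of Lemma~\ref{dual} and of the De Morgan-dual rules in $\psi$, $\phi_1$, $\phi_2$.

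First I would treat the base case $\size{\phi}=1$. Here $A =_{\minr} v$, $B =_{\minr} w$, $v \rels w =_{\minr} u$ and $u \minr C =_{\minr} \unit$. Since $\phi$ is tame and interpretable, $(A\rels B)\minr C$ is interpretable, so by condition~2 both $A\rels B$ and hence (again by 2, since $\rels\in\Rts$) $A$ and $B$ are interpretable; $\phi_1$ and $\phi_2$ as defined in the proof of Theorem~\ref{ShSpl} consist only of equality rules, so they are equalities, which in particular settles the ``furthermore'' clause when $\rels$ is an atom. For $\psi$, one uses Lemma~\ref{dual} to get $\psi'$ from $\bar u$ to $C$; tameness of $\psi'$ has to be argued by inspecting the derivation $\psi$ exhibited in the proof of Lemma~\ref{dual}, which is built from $\phi$ by one ${\maxr}\!\downarrow$ step and one equality step — since $\maxr\ne\as$ for any atom (the distinguished connective $\maxr$ is not an atom in a natural system), and $\phi$ is tame, this composite is tame; prefixing the equality rule $\bar v \relso \bar w = \bar u$ (interpretable by conditions~4 and~1, using that $\bar v, \bar w \in\Uts$) keeps it tame.

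Then I would go through the inductive cases (1)--(15) of Theorem~\ref{ShSpl}. In each, the derivations $\psi$, $\phi_1$, $\phi_2$ are assembled from derivations supplied by the induction hypothesis (which are tame) together with: equality rules (tame, using conditions~1 and~4 to see the new equalities stay interpretable), applications of Lemma~\ref{dual} (handled as in the base case, noting the auxiliary subformulae live under $\maxr$ or under units, not under atoms), and applications of ${\rels}\!\downarrow$, ${\rels^M}\!\downarrow$ or ${\maxr}\!\downarrow$ rules whose principal connective is $\rels\ne\minr$ or $\maxr$ — in cases (7)--(9) and (11)--(12) one must check that these new rule instances are interpretable, which follows from conditions~2--4 applied to the interpretable formulae already in hand, and that they are not in the scope of an atom, which holds because the surrounding context in each displayed derivation places them at the top level of a composition by $\minr$ or under the connectives already present in $\phi$. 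For the ``furthermore'' clause: if $\rels$ is an atom, then by condition~5 it is non-commutative, non-associative and non-unitary, which rules out cases (10)--(14) entirely; case (15) reduces to the base-case situation (giving equalities for $\phi_1,\phi_2$); and in the remaining cases a short extra induction shows $\phi_1$ and $\phi_2$ stay equalities, because the only way the measure could grow via a genuine rule is cases (5)--(6), but a non-atomic rule $\rho$ acting inside the $A$- or $B$-slot of $A\as B$ would sit in the scope of the atom $\as$ in $\phi$, contradicting tameness, so those cases cannot occur either.

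The main obstacle I expect is the bookkeeping around Lemma~\ref{dual} and the ${\maxr}\!\downarrow$ steps: one has to be careful that the fresh subformulae and rule instances introduced in the constructions of $\psi$ in cases (2), (3) and the base case are genuinely never placed in the scope of an atom, which relies on the fact (implicit in ``natural'') that $\maxr$ and $\minr$ are connectives of $\Rts'$ rather than atoms, together with condition~5 guaranteeing atoms have no unit so that a unit-introducing equality rule can never be an equality rule \emph{of an atom} appearing at a problematic depth. The ``furthermore'' clause is essentially immediate once one has internalised that a non-equality rule inside $A\as B$ violates tameness of $\phi$, so the bulk of the work is simply a disciplined pass through the fifteen cases confirming that every piece out of which $\psi,\phi_1,\phi_2$ are glued is tame.
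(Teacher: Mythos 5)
Your proposal is correct and takes essentially the same route as the paper, which states only that these propositions are ``straightforward to prove by studying each case'' (deferring details to the thesis it cites): your induction on $\size{\phi}$ through the fifteen cases of Theorem~\ref{ShSpl}, using the preservability conditions to keep each newly built formula interpretable and observing that every fresh non-equality rule instance sits under $\minr$ or $\maxr$ rather than inside an atom, is exactly that case analysis. One small overstatement: in the ``furthermore'' clause, cases (5)--(6) \emph{can} still occur with $\rho$ an equality rule acting inside the scope of the atom (tameness permits precisely this), in which case $\phi_1$ merely gains an equality step and remains an equality, so the conclusion holds but ``those cases cannot occur'' should be weakened to ``$\rho$ must there be an equality rule''.
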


\begin{proposition}
If $SA^{\downarrow}$ is preservable and $\phi$ is tame, then the derivation $\zeta$ obtained from Theorem \ref{CtxtRed} is tame.

Furthermore, if $\vlhole$ is not in the scope of an atom in $S\vlhole$ and $\phi$ is tame, then $\chi$ is tame.
\end{proposition}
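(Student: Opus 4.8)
The plan is to run the same induction as in the proof of Theorem~\ref{CtxtRed}, namely induction on $\sizerel{S}$, and to carry tameness along at every step. Two ingredients do all the work: the preceding proposition, which guarantees that applying Theorem~\ref{ShSpl} to a tame proof returns tame $\phi_1$, $\phi_2$ and $\psi$, and the observation recorded above that composing tame derivations by any connective that is not an atom yields a tame derivation. With these in hand the argument is essentially the bookkeeping needed to match the two outputs $\zeta$ and $\chi$ of Theorem~\ref{CtxtRed} against the two clauses of the statement.

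For the base case $\sizerel{S}=0$ one has $S\{A\}=_{\minr}A\minr K$, $H\vlhole=\vlhole$, $\zeta=_{\minr}\phi$ and $\chi$ consisting only of $=_{\minr}$ equality steps; both are tame because $\phi$ is tame and equality rules are always allowed in the scope of atoms. Moreover $\sizerel{S}=0$ forces $\vlhole$ not to be in the scope of any connective other than $\minr$, hence not in the scope of an atom, so the extra hypothesis of the second clause is automatically satisfied here. In the inductive step I would take $S\{A\}=_{\minr}(S'\{A\}\relt B)\minr C$ with $\relt\neq\minr$ (the case $(B\relt S'\{A\})\minr C$ being symmetric), apply Theorem~\ref{ShSpl} to $\phi$ to obtain tame $\psi,\phi_1,\phi_2$ via the preceding proposition, and then apply the induction hypothesis to $\phi_1$. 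Since $\phi_1$ is tame, the induction hypothesis already gives $\zeta$ tame, which disposes of the first clause.

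For the second clause I would additionally assume $\vlhole$ is not in the scope of an atom in $S\vlhole$. The connectives in whose scope $\vlhole$ lies in $S\vlhole$ are exactly $\relt$, $\minr$, and those in whose scope it lies in $S'\vlhole$; hence $\relt$ is not an atom, and $\vlhole$ is not in the scope of an atom in $S'\vlhole$, nor in $S'\vlhole\minr Q_1$. The induction hypothesis then also yields $\chi'$ tame. Now $\chi$ is assembled from $\chi'$, $\phi_2$ and $\psi$ by composing $\chi'\relt^M\phi_2$, applying one instance of $\relt\downarrow$, and composing its conclusion with $\psi$ by $\minr$. Since $\relt$ is not an atom, its dual is not an atom either (in a natural system negation maps the partition $\Rts=\Ats\cup\Rts'$ into itself), so $\relt^M$ is not an atom, and $\minr$ is never an atom; hence each composition preserves tameness and the $\relt\downarrow$ instance sits at the root, not in the scope of an atom. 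Therefore $\chi$ is tame, completing the induction.

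The main obstacle is not any computation but the bookkeeping around the hole-position hypothesis: one has to check that ``$\vlhole$ not in the scope of an atom'' is precisely the condition that descends from $S\vlhole$ to the sub-context $S'\vlhole$ in the induction, and that it is exactly what forces the context-rebuilding connective $\relt^M$ — together with the $\relt\downarrow$ step using it — to be non-atomic, so that the composition-of-tame-derivations principle can be invoked. The only point beyond what is already packaged in the preceding proposition where preservability (via naturality) is used is the remark that the dual of a non-atom is a non-atom, which holds because $\overline{\cdot}$ restricts to bijections $\Rts'\to\Rts'$ and $\Ats\to\Ats$.
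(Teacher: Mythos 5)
Your argument is correct and is exactly the ``straightforward case analysis'' the paper alludes to: the paper itself omits the proof (deferring the details to \cite{Aler:16:A-Study-:hc}), and the intended argument is precisely your induction on $\sizerel{S}$ tracking tameness through the construction of Theorem~\ref{CtxtRed}, using the preceding proposition for the shallow-splitting outputs and the fact that composition by non-atomic connectives preserves tameness. Your bookkeeping of the hole-position hypothesis --- that it descends to $S'\vlhole\minr Q_1$ and forces $\relt$, hence $\relt^M$, to be non-atomic --- is the right and only delicate point.
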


\begin{proposition}
If $SA^{\downarrow}$ is preservable and $\phi$ is tame, and $\rels$ is not an atom, then the derivation $\pi$ is tame obtained from Corollary \ref{adm} is tame.
\end{proposition}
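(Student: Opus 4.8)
The plan is to reduce the claim to the two propositions immediately preceding it — Shallow Splitting preserves tameness and Context Reduction preserves tameness — together with the remark in the text that composing tame derivations by any connective that is not an atom yields a tame derivation. The remaining work is then bookkeeping on the explicit proof $\pi$ constructed in the proof of Corollary~\ref{adm}: one checks that each component derivation occurring in $\pi$ is tame, and that every connective used to glue those components together is distinct from every atom.

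First I would observe that, since $\phi$ is tame and the displayed $\rels\uparrow$ instance in Corollary~\ref{adm} is not an equality rule, that instance does not occur in the scope of an atom in $\phi$; hence $\vlhole$ is not in the scope of an atom in $S\vlhole$. This makes the full Context Reduction proposition applicable, giving that both $\zeta$ and $\chi$ are tame. Moreover, inspecting the construction of $H\vlhole$ in the proof of Theorem~\ref{CtxtRed}, the connectives occurring in $H\vlhole$ come from connectives of $S\vlhole$ on the path to $\vlhole$ (together with $\minr$ and units); since none of the former is an atom, $H\vlhole$ composes its hole by non-atom connectives only.

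Next I would apply the Shallow Splitting proposition the same three times as in the proof of Corollary~\ref{adm}: to $\zeta$, giving that $\psi$, $\phi_1$, $\phi_2$ are tame; then to $\phi_1$, splitting $\rels$, and to $\phi_2$, splitting $\rels^M$, giving that $\psi_1,\phi_3,\phi_4$ and $\psi_2,\phi_5,\phi_6$ are tame. Here the hypothesis that $\rels$ is not an atom ensures that $\relso$ and $\rels^M$ are not atoms either, and preservability ensures that $\maxr$ and $\minr$ are not atoms, since atoms are non-commutative and non-associative whereas $\maxr$ and $\minr$ are associative and commutative. It follows that every connective that assembles $\pi$ — the $\maxr$ in $\phi_3\maxr\phi_5$ and $\phi_4\maxr\phi_6$, the $\rels^M$ of the central composition, the $\minr$'s around $\psi_1$, $\psi_2$ and $\psi$, and the connectives of $H\vlhole$ — is a non-atom. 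Reading the construction of $\pi$ from the inside out — the inner $\maxr\downarrow$, $\relso\downarrow$ and $\rels^M\downarrow$ steps, then the plugging into $H\vlhole$, then the composition with $\chi$ — and using repeatedly that composition (by non-atom connectives, or by inference with the applied rule at the root) preserves tameness of tame derivations, I conclude that $\pi$ is tame.

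I expect the only delicate point to be precisely this atom-avoidance bookkeeping, which is also where both hypotheses are genuinely used: tameness of $\phi$ is what keeps $\vlhole$, hence the hole of $H\vlhole$, out of the scope of atoms, so that $\chi$ and the final plugging-in remain tame; and "$\rels$ not an atom" is exactly what prevents the central $\rels^M$-composition from being a composition by an atom. Everything else is a direct appeal to the two preceding propositions and to closure of tameness under composition by non-atom connectives.
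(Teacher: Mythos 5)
Your argument is correct and follows exactly the route the paper intends: the paper itself gives no details for this proposition, stating only that it is ``straightforward to prove by studying each case'' (deferring to \cite{Aler:16:A-Study-:hc}), and your reduction to the two preceding tameness propositions plus the closure of tameness under composition by non-atom connectives, together with the check that $\maxr$, $\minr$, $\rels^M$, $\relso$ and the connectives of $H\vlhole$ are all non-atoms, is precisely that case study carried out on the explicit construction of $\pi$ in Corollary~\ref{adm}. In particular you correctly identify where each hypothesis is used: tameness of $\phi$ keeps the hole of $S\vlhole$ (hence of $H\vlhole$) out of the scope of atoms so that the strong form of the Context Reduction proposition applies, and ``$\rels$ not an atom'' is what makes the central compositions non-atomic.
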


All the above propositions are straightforward to prove by studying each case. A detailed proof can be found in \cite{Aler:16:A-Study-:hc}.

Unsurprisingly, the only case that can pose a challenge to the preservation of tameness by the splitting procedure is when composing derivation by an atom. However, if the subatomic system under study is obtained from an `ordinary' system in a natural way, this is not problematic as can be seen from the example below.

\begin{example}
Every rule of the linear fragment of system $\KS$ for classical logic corresponds to a tame derivation in $\SAKS$. Therefore every proof in that fragment corresponds to a tame proof in $\SAKS$. 

Tameness is preserved when eliminating rule $\as\!\!\uparrow$ since every instance of a rule $\vlan\!\!\downarrow$ with the premiss equal to $\ttt$ has conclusion equal to $\ttt$ and can therefore be replaced by an equality to obtain a tame cut-free proof. Therefore, if $\rels$ is an atom and $\phi$ is tame in Corollary \ref{adm}, $\pi$ is tame as well.
\end{example}

When designing a proof system that enjoys cut-elimination, we will therefore only have to ensure that the interpretation map is preservable. This is quite an easy task, since the conditions for an interpretation map to be natural are very lenient, and therefore there is much freedom to design an interpretation to suit many needs.

\section{Conclusions}

By considering atoms as self-dual non-commutative connectives, we have classified a vast class of inference rules in a uniform and very simple way. We defined the notion of `splittable system' and we proved a general normalisation result for it, entailing cut elimination and other admissibility results as corollaries.
In papers to follow, we will extend this result to non-linear logics by providing a generalised decomposition methodology: we will characterise a wide class of systems with contraction rules in which every proof can be rewritten in such a way that it is composed of a splittable linear phase followed by a phase made-up only of contractions. The generality of this methodology will only be made possible by exploiting the regularity and granularity of the subatomic systems presented in this paper to provide generalised rewriting rules that can then be applied to a variety of systems \cite{Aler:16:A-Study-:hc}. The final result is a comprehensive theory combining decomposition and splitting and thus giving a uniform treatment for most existing logics and providing a blueprint for the design of future proof systems.

In the past, deep inference provided other unifying and simplifying mechanisms. One reason is that, in deep inference, there are many more derivations than in the sequent calculus, and derivation composition is liberal to the extreme because any semantically sound composition is allowed (this is equivalent to saying that inference can occur at any level inside a formula). Inside the larger set of derivations so obtained, it is possible to restrict derivation composition and still achieve a great expressivity and analyticity for a broad range of logics. The most prominent example of this is the formalism of nested sequents, introduced by Brünnler in \cite{Brun:07:Deep-Seq:fk} and developed by several other authors. Nested sequents allow us to systematise and give analytic calculi to modal logics that cannot be accommodated by the sequent calculus. The restriction there consists in deep inference only being allowed inside the modality prefix. An interesting question could be: if we further restrict the rules on modalities so that they have the shape that we discussed in this paper, do we still have a reasonable proof theory, and how far can we go in terms of expressiveness? The second author is working at a paper giving a positive answer to this question for an extension of BV involving a self-dual modality with similar properties to the Kleene star. In general, we think that the shape of rules introduced in this paper, when suitably restricted to unary connectives, has the potential to contribute some insights and regularity to the theory of modal logics, but we do not feel confident enough to speculate too much at this stage. All we can say is that we cannot really expect that an approach based on regularities and dualities (such as the one proposed in this paper) can work for the huge variety of modal axioms, many of which are `irregular'.

One phenomenon that we consider interesting is the following. Consider Yetter's non-commutative linear logic \cite{Yett:90:Quantale:yg}. If we follow the natural prescription of the sequent calculus, a proof system of that logic consists of a proof system for non-commutative linear logic plus the cyclic permutation rule. While non-commutative linear logic can be described by rules adhering to the subatomic shape described in this paper, cyclic permutation cannot. Therefore, seemingly, we cannot use the methodology described here for Yetter's logic. However, Di Gianantonio proved that deep inference naturally takes care of cyclicity \cite{Di-G:04:Structur:wy}, without the need to add any special rule for it. As one can easily see by inspecting the rules, Di Gianantonio's proof system for Yetter's logic is indeed splittable. In other words, this is a case in which the limitations of splittable systems are compensated by the overall more liberal derivation composition available in deep inference. This is a reason for optimism on the applicability of the techniques adopted here.

\nocite{Gugl:02:Subatomi:oh,Gugl:05:Some-New:eh}

\bibliographystyle{eptcs}
\bibliography{biblio}

\end{document}